\def \[#1\] {
  \begin{align*}
  #1
  \end{align*}
}
\newcommand{\eq}[1]{\begin{align*}#1\end{align*}}
\newtheorem{theorem}{Theorem}[section]
\newtheorem{observation}{Observation}[section]
\newtheorem{prop}{Proposition}[section]
\newtheorem{algo}{Algorithm}[section]
\newtheorem{lemma}{Lemma}[section]
\theoremstyle{definition}
\newtheorem{definition}{Definition}[section]
\newtheorem{remark}{Remark}[section]
\def\@endtheorem{\qed\endtrivlist\@endpefalse } 
\newtheoremstyle{proof}
  {0pt}
  {\topsep}
  {}
  {0pt}
  {\itshape}
  {.}
  { }
  {\thmname{#1}\thmnumber{ #2}\textnormal{\thmnote{ (#3)}}}
\theoremstyle{proof}
\newtheorem*{prf}{Proof}
\newcommand{\lem}[2]{
\begin{lemma} #1
\end{lemma}
\begin{prf} #2
\end{prf}
}
\newcommand{\prp}[2]{
\begin{prop} #1
\end{prop}
\begin{prf} #2
\end{prf}
}
\newcommand{\R}{\mathbb{R}}
\renewcommand{\P}{\mathbf{P}}
\newcommand{\E}{\mathbf{E}}
\newcommand{\sbe}{\subseteq}
\newcommand{\ce}{\coloneqq}
\newcommand{\set}[1]{\{#1\}}
\newcommand{\prn}[1]{\left( #1 \right)}
\newcommand{\ep}{\epsilon}
\newcommand{\bb}{\mathbf{b}}
\newcommand{\bn}{\mathbf{b}_{-i}}
\newcommand{\vv}{\mathbf{v}}
\newcommand{\pp}{\mathbf{p}}
\DeclareMathOperator*{\argmax}{arg\!\max}
\begin{document}

\parindent=0mm
\parskip=2mm
\setlist[itemize]{noitemsep, topsep=0pt}

\title{Transaction Fee Mining and Mechanism Design}

\author{\textbf{Michael Tang \textsuperscript{1} \quad\quad Alex Zhang\textsuperscript{1}}
\\
\textsuperscript{1}Department of Computer Science, Princeton University
\\
\small
\texttt{\{mwtang, alzhang\}@princeton.edu}
}

\maketitle

\thispagestyle{empty}

\begin{abstract}
Transaction fees represent a major incentive in many blockchain systems as a way to incentivize processing transactions. Unfortunately, they also introduce an enormous amount of incentive asymmetry compared to alternatives like fixed block rewards. We analyze some of the incentive compatibility issues that arise from transaction fees, which relate to the bids that users submit, the allocation rules that miners use to choose which transactions to include, and where they choose to mine in the context of longest-chain consensus. We start by surveying a variety of mining attacks including undercutting, fee sniping, and fee-optimized selfish mining. Then, we move to analyzing mechanistic notions of user incentive compatibility, myopic miner incentive compatibility, and off-chain-agreement-proofness, as well as why they are provably incompatible in their full form. Then, we discuss weaker notions of nearly and $\gamma$-weak incentive compatibility, and how all of these forms of incentive compatibility hold or fail in the trustless auctioneer setup of blockchains, examining classical mechanisms as well as more recent ones such as Ethereum's EIP-1559 mechanism and \cite{chung}'s burning second-price auction. Throughout, we generalize and interrelate existing notions, provide new unifying perspectives and intuitions on analysis, and discuss both specific and overarching open problems for future work.
\end{abstract}

\tableofcontents

\section{Introduction}
The design choices surrounding decentralized cryptocurrencies have recently become a point of great interest both due to a menagerie of emerging challenges and their far-reaching consequences on real blockchain systems with billions of USD worth of transaction volume and tens of millions of users. The primary design of modern cryptocurrencies relies on a distributed ledger, which replaces trusted third parties with hashes, cryptography, and mechanism design. First introduced in \cite{nakamoto2008bitcoin}, the basic idea is that participants in the network, known as miners, will try to solve a computationally-difficult hash puzzle\footnote{We motivate the problem using Proof of Work here, but almost everything we discuss also applies to alternate formulations such as Proof of Stake}. If a miner is successful, they ``win" the responsibility of adding the next block to the ledger, which includes adding the next set of transactions to the ledger. The miner is then rewarded with a fixed block reward and a set of transaction fees corresponding to the transactions they included in their block. In many real-world systems like Bitcoin, the long-term goal is to eventually phase out the block reward (for deflationary or other reasons), until eventually transaction fees become the primary incentive for miners to continue working on the network. 

Transactions in the blockchain system are validated when they are put on a newly mined block and acknowledged by other participants. Transaction fees are paid by users making a transaction as a way of incentivizing miners to use their computational resources to keep the blockchain running; however, unlike fixed block rewards, transaction fees are not fixed, and users can offer variable transaction fee amounts, e.g. based on how urgently they want their transaction to be serialized on the blockchain. The simplest and most common mechanism is a first-price auction system where, users bid according to their personal value of having their transaction being added to the next block, and miner as the untrusted auctioneer is incentivized to include higher-paying transactions into their blocks. Unfortunately, it tunrs out that this formulation trades off simple or honest user bids in favor of honest transaction selection on the part of the miner, and the optimal bidding strategy for users is less than obvious. Similarly, having mined blocks representing different values due to the variability of transaction fees also incentivizes various deviations from the honest mining strategy, which are effectively undetectable due to the latency and other extenuating idiosyncracies of the distributed system. Sadly, mechanistic and mining deviations both decrease the predictability, capacity, and robustness of the ledger as a whole, among other implications.

In this paper, we survey major attacks under transaction fees that relate to \textit{mining}, such as undercutting \cite{undercut, 10.1007} and fee sniping \cite{Todd2013}, and point out possible defenses and holes in their analysis that can be further investigated in future work. We also explore \textit{mechanism design} questions surrounding incentive compatibility in the transaction-fee auction, proving a strong result by \cite{chung} regarding the impossibility of retaining classical incentive compatibility for all parties, and discussing the implications and alternative objectives to optimize in a not-fully-incentive-compatible world. We concretely survey several classical mechanisms \ref{classical} as well as state-of-the-art mechanisms \ref{eip} \ref{burning} and analyze their behavior and that of potential variants. We conclude by giving some final intuitions on the design space and open questions for future work.




\section{The Mining Game}
\subsection{Transaction-fee mining setup}
We first formalize a general model setup to represent the mining game with a focus on transaction fees as the block reward. We assume a set of $n$ miners $m_1,...,m_n$, each of which has some proportional mining power $\chi(m_i)$ such that $\sum_{i=1}^n \chi(m_i) = 1$. Each miners $m_i$ is aware of a directed tree $G(m_i)$ that represents their (private) knowledge of the blockchain, and they can choose any node/block on this tree to mine on. After a time interval that is exponentially distributed with mean $\chi(m_i)^{-1}$, the miner will find a new block $B$ and add a directed edge to the node they were mining on. They may then choose to broadcast this information to all other miners at any point. Our setup will be time-driven, so we observe and analyze the state of the system at some time $t$.


\subsection{Mining notation} 
We will analyze the decision-making processes of a set of miners at some time $t$ during the game. Each miner will have access to a directed tree $G$ representing the publicly known chain, and a directed tree $G(m_i)$ representing their private chain. Let $T$ denote the total set of transaction fees. We let $\mathcal{T}(B)$ denote the transaction fees included in block $B$, and $\mathcal{R}(B)$ denote the announced transactions fees not included in $B$ or any of its predecessors (i.e. the remaining transaction fees after block $B$). We also use $h(B)$ to denote the height of a block $B$, or the length of the chain that ends at $B$, and $\mathcal{H}$ denote the height of the highest announced block. For private chains, we use $\mathcal{H}_{m_i}$ to denote the highest block on $m_i$'s private chain. Because general rational strategies will not choose to mine two blocks at the same height, we let $b_{m_i}^j$ denote the unique block that miner $m_i$ mined at height $j$ if it exists or the block at height $j$ that the miner first heard about.

\subsection{Honest Miner algorithm}
All future decisions and strategies will be time-driven, in the sense that any algorithm will make decisions at every time-step about what blocks to mine on, what transaction fees to include, and whether or not to publicly announce these blocks. The canonical "honest miner" algorithm is the ideal behavior for miners on the chain
\begin{algo} \label{honest}
The \texttt{HonestMiner} strategy for miner $m_i$ will
\begin{enumerate}
    \item Mine on the end of the longest chain. If there are multiple, choose the one $m_i$ heard about first, which is $b_{m_i}^{\mathcal{H}}$.
    \item Take all remaining available transaction fees $\mathcal{R}(b_{m_i}^{\mathcal{H}})$ upon discovering a new block $B$.
    \item Publish the new block to the public chain immediately.
\end{enumerate}
\end{algo}
and can be formalized into the pseudocode below in Algorithm 1. \\
\begin{algorithm} 
\caption{Honest Miner}
\begin{algorithmic}[1] 
\Procedure{honestmine}{}       
    \State $G \leftarrow$ publicly known blocks
    \State $G(m_i) \leftarrow$ publicly known blocks
    \State $\mathcal{H} \leftarrow$ height of highest publicly known block
    \State mine at $h^{\mathcal{H}}_{m_i}$
    \While{indefinitely}  \Comment{primary loop}
    \If{found a block $B$}
        \State Take transaction fees $\mathcal{R}(h^{\mathcal{H}}_{m_i})$
        \State add $B$ to $h^{\mathcal{H}}_{m_i}$ in both $G$ and $G(m_i)$ and announce to the public chain
        \State set $h^{\mathcal{H}+1}_{m_i} = B$, iterate $\mathcal{H}$ by $1$
        \State  mine at $h^{\mathcal{H}}_{m_i}$
    \ElsIf{other miner found block $B$}
        \State add $B$ to $h^{\mathcal{H}}_{m_i}$ in both $G$ and $G(m_i)$ and announce to the public chain
        \State set $h^{\mathcal{H}+1}_{m_i} = B$, iterate $\mathcal{H}$ by $1$
        \State  mine at $h^{\mathcal{H}}_{m_i}$
    \EndIf
    \EndWhile  \label{honest loop}
\EndProcedure
\end{algorithmic}
\end{algorithm}
\\
All other algorithms discussed in \textbf{Section 3} can be fit into the pseudocode framework described in Algorithm 1 with minor modifications, so we will only write out the algorithm details for them.

\begin{observation} \label{mininggap}
Under the transaction-fee regime, the \texttt{HonestMiner} strategy is not profitable for a time interval of length $t$ after some block is mined.
\end{observation}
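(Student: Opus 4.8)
The plan is to show that immediately after a block is mined, a rational miner gains nothing by running \texttt{HonestMiner} for a short interval, because there are no transaction fees available to collect. The key observation is that the honest strategy's reward comes entirely from $\mathcal{R}(b_{m_i}^{\mathcal{H}})$, the announced-but-unincluded transaction fees sitting on top of the current chain tip. Right after a block $B$ is mined and announced honestly, $B$ scoops up \emph{all} remaining available fees (step 2 of Algorithm~\ref{honest}), so $\mathcal{R}(B) = \emptyset$ at that instant. Hence any block mined on top of $B$ in the immediately following period carries zero transaction-fee reward.

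First I would set up the accounting: fix the time $t_0$ at which block $B$ is mined and published, and let the next honest block on top of $B$ be found at time $t_0 + \tau$, where $\tau$ is exponentially distributed (the superposition of the miners' clocks). The reward that block collects is $\mathcal{T}$ of the new block, which under the honest rule equals $\mathcal{R}(B)$ evaluated at time $t_0+\tau$ — i.e. exactly the fees that arrived in the window $(t_0, t_0+\tau)$. Next I would invoke the modeling assumption that transaction fees arrive over time (they are ``announced''), so the expected total fee mass accumulated in a window of length $t$ is some increasing function $f(t)$ with $f(0)=0$; in particular for $t$ small this is $o(1)$ and, more to the point, strictly less than the long-run per-block fee revenue that makes mining worthwhile relative to the cost of hashing. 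Then I would compare against the miner's cost: operating mining hardware over an interval of positive length has positive expected cost, while the expected fee revenue collectible from a block mined within that interval tends to $0$ as the interval shrinks to just after $B$. Therefore there is an interval of length $t$ after $B$ during which expected revenue is below expected cost, so \texttt{HonestMiner} is not profitable there.

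The main obstacle is making precise what ``profitable'' means in this time-driven model, since the paper has not yet pinned down a cost model or a discounting/normalization convention. I would handle this by stating the minimal assumption needed — that mining has a positive opportunity cost per unit time, or equivalently that profitability is measured against the steady-state fee-arrival rate — and then the argument reduces to the clean combinatorial fact that $\mathcal{R}(B)=\emptyset$ right after $B$, together with continuity of fee accumulation in time. A secondary subtlety is that a miner could in principle mine on a \emph{different} block than $B$ (e.g. $B$'s parent, attempting to fork), but that is a deviation from \texttt{HonestMiner}, so it is outside the scope of this observation; the honest strategy is pinned to $b_{m_i}^{\mathcal{H}} = B$, which is exactly what forces the reward to be the just-emptied $\mathcal{R}(B)$. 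This ``mining gap'' phenomenon is precisely the hook motivating the attacks analyzed in Section~3, so I would close by noting that the gap creates an incentive to delay mining or to mine elsewhere.
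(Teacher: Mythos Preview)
Your proposal is correct and aligns with the paper's treatment: the paper does not give a formal proof of this observation at all, but simply remarks (citing \cite{undercut}) that ``for some time after a block is found, there is a non-trivial mining cost that outweighs the profit from transaction fees for discovering a new block.'' Your argument is a faithful and more detailed unpacking of exactly that intuition --- the honest miner empties $\mathcal{R}(B)$, fees re-accumulate continuously from zero, and positive per-unit-time mining cost then dominates for an initial interval --- so you have supplied more rigor than the paper itself does.
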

Dubbed the \textit{mining gap} in \cite{undercut}, the point of Observation \ref{mininggap} is that for some time after a block is found, there is a non-trivial mining cost that outweighs the profit from transaction fees for discovering a new block. This observation may incentivize miners to use other non-honest strategies that make the blockchain more vulnerable to the attacks we will mention in the next section, and further motivate the design of new mechanisms in \textbf{Section 4}. 

\section{Transaction Fee Mining Attacks} \label{attacks}
\subsection{Undercutting}
First proposed in \cite{undercut}, the undercutting attack is performed when a miner forks the head of a chain and leaves a subset of the transaction fees unclaimed to encourage other miners to mine on top of the fork. They also introduce an simple alternative to the canonical "honest mining" strategy, which they call the \texttt{PettyCompliant} strategy. The idea is to follow the same strategy as the "honest mining" strategy, except if there is a tie in the longest chain, the \texttt{PettyCompliant} strategy will choose to mine on the block with the most available transaction fees rather than the oldest. More formally,
\begin{algo} \label{pettycompliant}
The \texttt{PettyCompliant} strategy for miner $m_i$ will
\begin{enumerate}
\item Mine on $b^{\mathcal{H}}_{*} = \argmax_{b^{\mathcal{H}}_{m_i}} \mathcal{R}(b_{m_i}^{\mathcal{H}})$, which is the block at height $\mathcal{H}$ with the highest available transaction fees.
\item Take all remaining available transaction fees $\mathcal{R}(b^{\mathcal{H}}_{*})$ upon discovering a new block $B$.
\item Publish the newly discovered block to the public chain immediately.
\end{enumerate}
\end{algo}
It is easy to observe that during the existence of a fork, $\texttt{PettyCompliant}$ performs strictly better than the honest miner because it mines on a block with higher rewards without the risk of losing it. Undercutting strategies can exploit this fact by incentivizing \texttt{PettyCompliant} miners to extend their forked blocks, especially in the event of a tie for the highest block. A simple undercutting strategy proposed in \cite{undercut} is called \texttt{LazyFork} which will fork the head of the chain if it is more valuable to take the transaction fees inside. More formally,
\begin{algo} \label{lazyfork}
The \texttt{LazyFork} strategy for miner $m_i$ will
\begin{enumerate}
\item Consider $b^{\mathcal{H}}_{*} = \argmax_{b_{m_i}^{\mathcal{H}}} \mathcal{R}(b_{m_i}^{\mathcal{H}})$, which is the block at height $\mathcal{H}$ with the highest available transaction fees and consider $b^{\mathcal{H}-1}_{*} = \argmax_{b_{m_i}^{\mathcal{H}-1}} \mathcal{R}(b_{m_i}^{\mathcal{H}-1})$. Define $\Delta_{\mathcal{H}} = \mathcal{R}(b^{\mathcal{H}-1}_{*}) - \mathcal{R}(b^{\mathcal{H}}_{*})$, i.e. the maximum transaction fees you could get from forking on top of $b^{\mathcal{H}-1}_{*}$.
\begin{enumerate}
\item If $m_i$ owns $b^{\mathcal{H}}_{*}$, $m_i$ will mine on top of $b^{\mathcal{H}}_{*}$. 
\item Else if $\mathcal{R}(b^{\mathcal{H}}_{*}) \geq \Delta_{\mathcal{H}}$, $m_i$ will mine on top of $b^{\mathcal{H}}_{*}$ because it is not profitable enough to fork.
\item Else $m_i$ will mine on $b^{\mathcal{H}-1}_{*}$ because it is profitable enough to fork.
\end{enumerate}
\item Take $\frac{1}{2}$ of the remaining available transaction fees upon discovering a new block $B$.
\item Publish the newly discovered block to the public chain immediately.
\end{enumerate}
\end{algo}
The existence of $\texttt{LazyFork}$ miners is dangerous because it increases the probability of orphaned blocks existing, as well as the vulnerability of the blockchain to a double-spending attack. However, the $\texttt{LazyFork}$ strategy itself is susceptible to more aggressive undercutting by other miners, and therefore a more aggressive modification called $\texttt{FunctionFork}$ uses a function $f(\cdot)$ to determine the amount of transaction fees to leave open to other miners, and whether or not they should continue honestly mining on top of the head of $G$, or perform an undercutting attack. 
\begin{remark}
While another option for an undercutting strategy is to choose to mine on top of blocks lower than $\mathcal{H} - 1$, for ease of analysis the proposed strategies only consider mining on $\mathcal{H} - 1$ or $\mathcal{H}$. While mining at a lower level on the chain is riskier, there is potentially interesting equilibrium behavior that may arise from opening up a wider set of options for an undercutting miner.
\end{remark}

\begin{algo} \label{functionfork}
The \texttt{FunctionFork} strategy for miner $m_i$ will
\begin{enumerate}
\item Define the following variables
\begin{itemize}
\item $b^{\mathcal{H}}_{*} = \argmax_{b_{m_i}^{\mathcal{H}}} \mathcal{R}(b_{m_i}^{\mathcal{H}})$, which is the block at height $\mathcal{H}$ with the highest available transaction fees.
\item $b^{\mathcal{H}-1}_{*} = \argmax_{b_{m_i}^{\mathcal{H}-1}} \mathcal{R}(b_{m_i}^{\mathcal{H}-1})$, which is the block at height $\mathcal{H}-1$ with the highest available transaction fees. 
\item $\Delta_{\mathcal{H}} = \mathcal{R}(b^{\mathcal{H}-1}_{*}) - \mathcal{R}(b^{\mathcal{H}}_{*})$, i.e. the maximum transaction fees $m_i$ could get from forking on top of $b^{\mathcal{H}-1}_{*}$.
\item $\mathcal{V}_{\text{CONT}}(f) = f(\mathcal{R}(b^{\mathcal{H}}_{*}))$ which is the value for continuing on the highest chain.
\item $\mathcal{V}_{\text{UND}}(f) = \min\{f(\mathcal{R}(b^{\mathcal{H}-1}_{*})), \Delta_{\mathcal{H}}\}$ which is the value for undercutting.
\end{itemize}
\item Choose where to mine based on
\begin{enumerate}
\item If you own $b^{\mathcal{H}}_{*}$, mine on top of $b^{\mathcal{H}}_{*}$. 
\item Else if $\mathcal{V}_{\text{CONT}}(f) \geq \mathcal{V}_{\text{UND}}(f)$, mine on top of $b^{\mathcal{H}}_{*}$ because it is not profitable enough to fork.
\item Else mine on $b^{\mathcal{H}-1}_{*}$ because it is profitable enough to fork.
\end{enumerate}
\item If you mined on top of $b^{\mathcal{H}}_{*}$, take $\mathcal{V}_{\text{CONT}}(f)$ transaction fees, and $\mathcal{V}_{\text{UND}}(f)$ otherwise.
\item Publish the newly discovered block to the public chain immediately.
\end{enumerate}
\end{algo}
Clearly, $f(\cdot)$ is supposed to be some mapping from the transaction fees you receive to the amount you give, and therefore any general strategy will make it a monotonically increasing function. \cite{undercut} show that when considering the linear family of functions $f(x) = kx$ for $k \in [0,1]$, if we assume that every miner is \textbf{non-atomic}, i.e. there are infinite miners with infinitesimally small hashing power, so they only locally consider how to maximize the block they just found because with zero probability they find another, then the equilibrium strategy is to undercut other players by slowly decreasing $k$ until it hits zero. They also show that under the assumption that every miner is \textbf{atomic}, the equilibrium behavior is unstable. They further prove that there exists a family of functions $\mathcal{F}$ in which it is an equilibrium for every miner to use \texttt{FunctionFork}$(f)$ for $f \in \mathcal{F}$ under strong conditions, which we state without proof:
\begin{theorem} \label{equilibrium}
Assume every miner is non-atomic, and that miners may only mine on top of $\mathcal{H}$ and $\mathcal{H}-1$. Let $W_0$ be the upper branch of the Lambert function $W$ satisfying $W_0(xe^x) = x$ for all $x \in [-\frac{1}{e}, \infty)$ and $W_0(x) \in [-1,\infty)$. Then, for any constant $\gamma \leq \frac{1}{2}$ such that $2 \gamma - \ln(\gamma) \geq 2$, define the monotonically-increasing function $f(x)$ as
\[ f(x) = \begin{cases} 
          f(x) = x & x \leq \gamma \\
          f(x) = -W_0(-\gamma e^{x - 2 \gamma}) & \gamma < x < 2 \gamma - \ln(\gamma) - 1 \\
          f(x) = 1 & x \geq 2 \gamma - \ln(\gamma) - 1
       \end{cases}
 \]
Then it is an equilibrium for every miner to use the \texttt{FunctionFork}(f) strategy, i.e. it is a best response over all potential other strategies. 
\end{theorem}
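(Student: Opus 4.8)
The plan is to use the model's non-atomicity assumption to reduce the equilibrium condition to a one-shot best-response question, and then to show that the explicit piecewise $f$ is calibrated so that neither of a myopic miner's two levers --- how many fees to claim, and whether to extend the longest chain or to fork at height $\mathcal{H}-1$ --- admits a strictly profitable deviation. Since a non-atomic miner discovers a second block with probability zero, a profile in which everyone runs \texttt{FunctionFork}($f$) is an equilibrium precisely when, conditioned on a given miner being the one to find the next block out of whatever configuration the chain is in, the action \texttt{FunctionFork}($f$) prescribes maximizes that miner's expected revenue from that single block. So I would first pin down the configurations that actually arise (``recurrent'') when everyone plays \texttt{FunctionFork}($f$): along the unforked longest chain a block found in a state with $r$ available fees claims exactly $f(r)$ and leaves $r-f(r)$, with fees accruing between discoveries, and I would record the induced transition on the ``remaining-fees-at-the-head'' state together with the sibling configurations a fork can produce, so that every later estimate need only be checked on these.

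Next, the claim lever. Fix a miner who extends the current head (carrying $r$ available fees) with a new block $\beta'$. Claiming strictly less than $f(r)$ is weakly dominated: it lowers the immediate take and, by myopia, buys nothing later. For a claim $c > f(r)$ the point is that $\Delta_{\mathcal{H}}$ then equals $c$, so a later \texttt{FunctionFork} miner who forks at $\beta'$'s parent claims $\min\{f(\cdot), c\}$, and the resulting sibling carries strictly more remaining fees than $\beta'$ exactly when $c$ exceeds $f$ of the (by-then larger) parent state; that miner strictly prefers to perform the fork, so $\beta'$ is orphaned unless no block is found before the parent state grows past $f^{-1}(c)$, an event whose probability decreases in $c$. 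Balancing the extra fees against this shrinking survival probability should fix the optimum at exactly $c = f(r)$, the boundary case being the one where the forking sibling only ties $\beta'$ and loses the tie-break, so that $\beta'$ is never orphaned. This is where the shape of $f$ is forced: on the middle interval $f$ must satisfy the first-order ODE $f'(x) = \dfrac{f(x)}{1-f(x)}$ with $f(\gamma)=\gamma$ --- equivalently the relation $\ln f - f = x - 2\gamma + \ln\gamma$, whose solution is the stated $-W_0(-\gamma e^{x-2\gamma})$ --- while the linear piece $f(x)=x$ governs the low-fee regime, where too little is at stake for a fork to recoup its cost, and the constant piece $f(x)=1$ caps the claim once what is left over already deters a fork by itself.

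Then the location lever: a miner who forks at $\mathcal{H}-1$ where \texttt{FunctionFork} would continue, or continues where it would fork, cannot do strictly better. Here I would compare $\mathcal{V}_{\text{CONT}}$ with $\mathcal{V}_{\text{UND}}$ on the recurrent configurations, using the $\Delta_{\mathcal{H}}$-cap in the definition of $\mathcal{V}_{\text{UND}}$ to check that the branch \texttt{FunctionFork} selects is the one with the larger \emph{realized} expected value once the orphaning risk of a fork that leaves only as much as the incumbent head is taken into account. Finally I would discharge the side conditions: $\gamma \le \tfrac12$ and $2\gamma - \ln\gamma \ge 2$ are exactly what is needed for the argument $-\gamma e^{x-2\gamma}$ to lie in $[-1/e,0)$ over the middle interval (so $f$ is real and single-valued on the principal branch), for that interval $(\gamma,\, 2\gamma - \ln\gamma - 1)$ to be nonempty, for the three pieces to paste continuously and monotonically --- with $f(\gamma)=\gamma$ and $f(2\gamma-\ln\gamma-1)=1$ following from $W_0(-\gamma e^{-\gamma})=-\gamma$ and $W_0(-1/e)=-1$ --- and for the inequalities invoked in the two deviation arguments above.

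The step I expect to be the main obstacle is the continuation analysis underlying both deviation arguments: making precise, in continuous time with fees accruing and in the non-atomic limit, what the owner of a block actually collects once a fork against it is triggered --- the ensuing mining race, the possibility of repeated undercutting, and the tie-break convention among blocks of equal height and equal remaining fees --- and showing that all of this collapses to the clean ``claim $f(r)$, never orphaned'' picture on the recurrent configurations. Once that continuation structure is set up correctly, the calibration of $f$ (the ODE and its Lambert-$W$ solution) and the well-definedness and monotonicity bookkeeping should be comparatively routine.
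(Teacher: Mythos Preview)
The paper does not prove this theorem: it is explicitly introduced with ``which we state without proof'' and attributed to \cite{undercut}. There is therefore no proof in the paper to compare your proposal against.

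That said, your outline tracks the structure of the original argument in \cite{undercut} reasonably well: the non-atomic reduction to a one-shot decision, the two levers (how much to claim and where to mine), and the identification of the ODE $f' = f/(1-f)$ whose solution is the Lambert-$W$ expression are all the right ingredients. Your acknowledged obstacle --- making the continuation analysis precise when a fork is triggered --- is indeed where the real work lies, and your sketch does not yet pin down the survival-probability calculation that forces exactly that ODE (you assert the balance ``should fix the optimum at exactly $c=f(r)$'' without writing down the expected-payoff function whose first-order condition yields it). A full proof would need to derive that payoff explicitly from the exponential arrival model and the tie-breaking rule, then differentiate; until that is written out, the derivation of $f$ remains heuristic.
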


Finally, \cite{undercut} conclude using Theorem \ref{equilibrium} and a set of simulations that while it is intractible to prove that convergence to such equilibrium is guaranteed in practice, they are able to simulate scenarios where rational miners eventually change to undercutting eachother as the equilibrium, severely damaging the stability of the blockchain.

\begin{remark} An important caveat to the analysis above is that it ignores block size limits, which are the maximum claimable fees on any block. \cite{10.1007} finds that the undercutting scheme is significantly weakened than exactly what was proposed above under these more realistic conditions because it makes undercutting less profitable in expectation. This fact also somewhat implies that any defenses against undercutting should restrict the size of transaction fees that can be found on a block, or restrict what transaction fees can be put on a forked block in the first place. \cite{Todd2013} was added to Bitcoin for this purpose, which ensures transactions cannot be put into a block unless it is of a certain height on the chain, preventing an attacker from putting all high-paying fees onto a single forked block.
\end{remark}

\subsection{Fee sniping}
Fee sniping (as described in \cite{Todd2013, 10.1007/978-3-319-70278-0_17}) is an attack where a miner will deliberately fork a block to take the high transaction fees found on the currently accepted block. Although it is unlikely that any miner will re-mine a previous block and find a new block before a new block is found by another miner, because transaction fees are not fixed, it is sometimes profitable in expectation to attempt a fee snipe. We define a simple fee sniping strategy below
\begin{algo} \label{feesnipe}
The \texttt{BasicFeeSnipe} strategy for miner $m_i$ will
\begin{enumerate}
\item Consider $b^{\mathcal{H}}_{*} = \argmax_{b_{m_i}^{\mathcal{H}}} \mathcal{T}(b_{m_i}^{\mathcal{H}})$, which is the block at height $\mathcal{H}$ with the highest available transaction fees and consider $b^{\mathcal{H}-1}_{*} = \argmax_{b_{m_i}^{\mathcal{H}-1}} \mathcal{T}(b_{m_i}^{\mathcal{H}-1})$. Define $\Delta_{\mathcal{H}} = \mathcal{R}(b^{\mathcal{H}-1}_{*}) - \mathcal{R}(b^{\mathcal{H}}_{*})$, i.e. the maximum transaction fees you could get from forking on top of $b^{\mathcal{H}-1}_{*}$.
\begin{enumerate}
\item If $m_i$ owns $b^{\mathcal{H}}_{*}$, $m_i$ will mine on top of $b^{\mathcal{H}}_{*}$. 
\item Else if $\mathcal{R}(b^{\mathcal{H}}_{*}) \geq \chi(m_i)^2 \cdot \mathcal{R}(b^{\mathcal{H}-1}_{*})$, $m_i$ will mine on top of $b^{\mathcal{H}}_{*}$ because it gets a higher expected reward by being honest.
\item Else $m_i$ will mine on $b^{\mathcal{H}-1}_{*}$ because it gets a higher expected reward by forking.
\end{enumerate}
\item Take all of the remaining available transaction fees upon discovering a new block $B$.
\item Publish the newly discovered block to the public chain immediately.
\end{enumerate}
\end{algo}
We can observe that \texttt{BasicFeeSnipe} is almost the same as \texttt{LazyFork} discussed in \textbf{Section 3.1}, except it decides where to mine based on the expected value of winning the next two blocks by forking and not on incentivizing other miners. We see that in the forking condition $\chi(m_i)^2 \cdot \mathcal{R}(b^{\mathcal{H}-1}_{*}) > \mathcal{R}(b^{\mathcal{H}}_{*})$, where the left side of the inequality is precisely the probability of winning the next two blocks $\chi(m_i)^2$ multiplied by the current transaction-fee rewards for winning $\mathcal{R}(b^{\mathcal{H}-1}_{*})$ at the current timestep, which is actually a safe estimate of the expectation because during the mining process, more fees will come in.

\begin{remark}
To combat fee sniping, \cite{Todd2013} added an nLockTime variable on the Bitcoin blockchain to forcibly prevent miners from orphaning the current best block by forking an earlier block. They take advantage of block size limits to ensure one cannot push an absurdly large amount of transaction fees to a single block.
\end{remark}

\subsection{Whale transaction attacks} \cite{10.1007/978-3-319-70278-0_17} analyzes a type of attack performed in which a miner $A$ makes a valid transaction with another miner $B$ on the main blockchain, then forks the chain to a point before the transaction was valid to nullify it. To incentivize other miners to work on this new chain, they offer an exorbitant transaction fee that can be claimed if the fork becomes the new main chain. They call this attack a \textit{whale transaction}, and demonstrate that it also can lead to vulnerabilities in the blockchain like double spending. 
\\\\
Consider the following simplified setup for miner $A$ attempting to perform a whale transaction:
\begin{enumerate}
\item \textbf{Assumptions:} Let $1$ be the normalized average block reward (fixed block reward + average transaction fee) and let $\delta$ be the (normalized) whale fee that miner $A$ offers. Assume hashing power $\chi(x_i)$ is fixed for all miners, and relative mining power is known among all miners. Also assume miners will choose the most profitable strategies, and that they can either be honest or work on miner $A$'s fork. Finally, to allow for easier steady-state analysis, assume at any point that the relative mining power of miner $A$'s fork is fixed, i.e. they will offer more of a fee $\delta$ to keep the mining power constant. 
\item \textbf{Pre-mining phase:} Miner $A$ will make a transaction with miner $B$, then attempt to undo it by forking at an earlier node in $G$. Miner $A$ can then issue several illegimate transactions such as a double spending transaction in $G(A)$ until they are ready to announce their blocks. They will also ensure that they have hefty unclaimed transaction fees on their fork.
\item \textbf{Racing phase:} Assume $\mathcal{H}_A < \mathcal{H}$, so miner $A$'s fork has to catch up to the main blockchain. Let $\gamma_i = 1$ if miner $m_i$ decides to mine on miner $A$'s fork, and $0$ otherwise. Then the relative mining power on miner $A$'s fork is $q = \chi(A) + \sum \gamma_i \chi(m_i)$ and the relative mining power on the honest chain is $p = 1 - q$. Let $z$ denote the lead that the main branch has over $A$'s fork. Then, the probability that the fork overtakes the main branch, i.e. $z = -1$, can be computed by modeling this process as a biased random walk where $z$ decrements with probability $q$ and increments with probability $p$. The probability that the fork eventually overtakes the main branch is therefore
$$ a_z = \min\{1, q/p\}^{z+1} $$
\end{enumerate}
\begin{theorem} \label{whaleattack}
Suppose miner $A$ has mining power $\chi(A)$, and a miner $X$ has mining power $\chi(X)$. Assume every other miner is honest. For notational sake, let $M = \sum_{m \neq A} \chi(m)$ be the total mining power of all miners except $A$. Then under the setup defined above, miner $X$ is incentivized to mine on $A$'s fork when
$$ \delta > \frac{(1 - (\frac{\chi(A)}{M})^{z+1})}{M} \cdot \frac{\chi(A) + \chi(X)}{(\frac{\chi(A) + \chi(X)}{M + \chi(X)})^{z+1}} - 1 $$
\end{theorem}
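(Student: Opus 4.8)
The plan is to treat miner $X$'s choice as a comparison of two expected payoffs — the payoff from running the \texttt{HonestMiner} strategy on the public chain versus the payoff from relocating all of $X$'s hash power onto $A$'s fork — and to identify the stated threshold on $\delta$ with the break-even point between them. Each payoff is built from the overtake probability $a_z=\min\{1,q/p\}^{z+1}$ derived in the racing-phase setup, where $q$ and $p$ are the relative mining powers on the fork and on the honest chain; each payoff accounts only for the contested portion of the race, since blocks on the losing branch are orphaned and earn nothing, consistent with the simplified setup. Throughout I would invoke the standing assumptions that hashing powers are fixed and commonly known, that every miner besides $A$ and $X$ is honest, and that $A$ tops up $\delta$ so the fork's relative power stays constant; I would also restrict to the case where the fork is the underdog, so the $\min$ resolves to $q/p$ (otherwise the fork overtakes with probability one and the bound is vacuous, matching the premise $\mathcal{H}_A<\mathcal{H}$).

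First I would handle the honest branch. If $X$ mines honestly, the public chain carries relative power $M$ and $A$'s fork carries $\chi(A)$, so the fork overtakes with probability $(\chi(A)/M)^{z+1}$ and the public chain prevails with probability $1-(\chi(A)/M)^{z+1}$. Conditioned on the public chain prevailing, $X$ keeps a $\chi(X)/M$ share of the (normalized) contested block reward, so $X$'s expected honest payoff is $\tfrac{\chi(X)}{M}\bigl(1-(\chi(A)/M)^{z+1}\bigr)$.

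Next I would handle the fork branch. If $X$ relocates, the fork's relative power becomes $\chi(A)+\chi(X)$ and the honest chain takes the complementary power, so applying the same random-walk bound the fork catches up with probability $\bigl(\tfrac{\chi(A)+\chi(X)}{M+\chi(X)}\bigr)^{z+1}$. Conditioned on that event, $X$ collects a $\tfrac{\chi(X)}{\chi(A)+\chi(X)}$ share of the fork's reward pool, namely one normalized block reward together with the whale fee $\delta$, for an expected fork payoff of $\tfrac{\chi(X)}{\chi(A)+\chi(X)}(1+\delta)\bigl(\tfrac{\chi(A)+\chi(X)}{M+\chi(X)}\bigr)^{z+1}$.

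Finally I would impose that the fork payoff strictly exceeds the honest payoff: cancelling the common factor $\chi(X)$, multiplying through by $(\chi(A)+\chi(X))\bigl(\tfrac{M+\chi(X)}{\chi(A)+\chi(X)}\bigr)^{z+1}$, and subtracting $1$ yields the claimed threshold on $\delta$ after routine rearrangement. I expect the genuinely delicate step to be the mining-power bookkeeping in the fork branch — correctly tracking how $X$'s own relocation shifts the bias of the underlying random walk — together with pinning down the convention by which the whale fee is split among the fork's miners and the normalization of the contested reward on each chain; once those modelling choices are fixed, the remainder is elementary algebra.
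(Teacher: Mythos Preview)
Your argument is correct and follows the paper's proof essentially verbatim: compute the expected honest reward $(1-a_z)\,\chi(X)/M$ with $a_z=(\chi(A)/M)^{z+1}$, compute the expected fork reward $a_z'\,(1+\delta)\,\chi(X)/(\chi(A)+\chi(X))$ with the updated overtake probability, set the latter strictly larger, cancel $\chi(X)$, and isolate $\delta$. One minor bookkeeping note: the ``complementary power'' you describe is $M-\chi(X)$, which is exactly what the paper uses in its derivation, even though both the theorem statement and your displayed probability carry $M+\chi(X)$ --- that sign discrepancy is already present in the paper itself.
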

\textbf{Proof:} 
Assume miner $X$ decides to stay honest. Then their expected reward is the probability that the fork fails, $(1 - a_z)$ multiplied by the reward (which is normalized to $1$) for finding a new block, and the probability of finding a new block, $\frac{\chi(X)}{M}$. So
$$ \E[X \text{ mines honest reward}] = (1 - a_z) \frac{\chi(X)}{M} = \frac{(1 - (\frac{\chi(A)}{M})^{z+1}) \cdot \chi(X)}{M} $$
Now assume $X$ decides to work on $A$'s fork. Their expected reward is the probability that the fork succeeds, $a_z$, multiplied by the reward $(1 + \delta)$ for finding a new block, and the probability of finding a new block, $\frac{\chi(X)}{q}$, where $q = \chi(A) + \chi(X)$ now. So,
$$ \E[X \text{ mines whale reward}] = a_z \frac{\chi(X)}{q} (1 + \delta) = \frac{(\frac{\chi(A) + \chi(X)}{M - \chi(X)})^{z+1}) \cdot \chi(X)}{\chi(A) + \chi(X)} \cdot (1 + \delta) $$
Now miner $X$ will be incentivized to mine on miner $A$'s fork if their expected reward is higher. This is precisely when
\[ \frac{(\frac{\chi(A) + \chi(X)}{M - \chi(X)})^{z+1}) \cdot \chi(X)}{\chi(A) + \chi(X)} \cdot (1 + \delta) > \frac{(1 - (\frac{\chi(A)}{M})^{z+1}) \cdot \chi(X)}{M} \\
\delta > \frac{(1 - (\frac{\chi(A)}{M})^{z+1})}{M} \cdot \frac{\chi(A) + \chi(X)}{(\frac{\chi(A) + \chi(X)}{M + \chi(X)})^{z+1}} - 1 
\\ \square \]
\cite{10.1007/978-3-319-70278-0_17} use Theorem \ref{whaleattack} and plug in values for $z$ and $\delta$ to find what relative values of $\delta$ are necessary to motivate a whale attack. It should be remarked that the values of $\delta$ are quite large, and therefore, a whale attack is only rational if miner $A$ is able to get a huge sum of rewards. However, as block rewards continue to diminish, it is also easily observable that the whale transaction can be smaller to promote the same effect.

\begin{remark}
The analysis above makes strong assumptions about the honest behavior of other miners on the chain, and therefore it is entirely possible that lower values of $\delta$ will allow for a whale transaction attack to be profitable for any deviant miners. Additionally, under the transaction-fee regime, as we will see in the next section, miners can use blocks with low-value fees as a starting point to attempt low-risk attacks. 
\end{remark}

\subsection{Fee-optimized selfish mining}
\cite{DBLP:journals/corr/EyalS13} first proposed the selfish mining algorithm, which can yield a higher expected reward than the honest miner even with $<51\%$ of the total hashing power. The intuitive idea is to waste other miners' hashing power by hiding the existence of their owned mined blocks, and only revealing them when the public chain reveals a new block. A transaction-fee regime version of this algorithm is defined below:
\begin{algo}
The \texttt{SelfishMiner} strategy for a miner $m_i$ will
\begin{enumerate}
\item Choose to mine on $\mathcal{H}_{m_i}$, which is the highest block on their private chain.
\item Take all remaining available transaction fees $\mathcal{R}(\mathcal{H}_{m_i})$ upon discovering a new block $B$.
\item If $h(B) = \mathcal{H}$, publish the block immediately. Otherwise, if there exists two distinct blocks of height $\mathcal{H}$ where one is owned by $m_i$, and $\mathcal{H}_{m_i} = \mathcal{H} + 1$, reveal $b_{m_i}^{\mathcal{H}}$.
\end{enumerate}
\end{algo}
In \cite{undercut}, they observe that under the transaction-fee regime, the default selfish mining algorithm will have a slightly higher expected reward than in the standard fixed-block reward regime despite performing the same actions.
\begin{theorem} \label{selfishminer}
Let $\gamma \in [0,1]$ be the probability that if the selfish miner $m_i$ is in a race with the public chain, it is not orphaned. Then given $\chi(m_i) = \alpha \in (0,0.5)$, the expected reward of \texttt{SelfishMiner} is
$$ \E(\text{reward}) = \frac{5\alpha^2 - 12\alpha^3 + 9\alpha^4 - 2\alpha^5 + \gamma(\alpha - 4 \alpha^2 + 6\alpha^3 - 5\alpha^4 + 2\alpha^5)}{2\alpha^3 - 4\alpha^2 + 1} $$
\end{theorem}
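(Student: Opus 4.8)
The plan is to set up a Markov chain on the ``lead'' state of the selfish miner, exactly as in \cite{DBLP:journals/corr/EyalS13}, but to track transaction-fee rewards rather than block counts. Let the state $k$ denote the number of hidden blocks the selfish miner is ahead on their private chain; the relevant states are $0$, $0'$ (the post-fork state where both branches have length one and a race is on), $1$, $2$, and $k \ge 2$ in general. The transitions are: from state $k \ge 1$, with probability $\alpha$ the selfish miner extends their lead to $k+1$, and with probability $1-\alpha$ the honest network finds a block, which either (for $k \ge 3$) drops the lead to $k-1$ after the selfish miner publishes one block, or (for $k = 2$) collapses to state $0$ after the selfish miner publishes both hidden blocks, or (for $k=1$) moves to the race state $0'$. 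From $0'$, three things can happen: the selfish miner finds the next block (prob.\ $\alpha$, selfish wins two blocks), the honest miner finds a block extending the selfish branch (prob.\ $(1-\alpha)\gamma$, selfish wins one block and honest wins one), or the honest miner extends the honest branch (prob.\ $(1-\alpha)(1-\gamma)$, honest wins two blocks and the selfish block is orphaned). First I would write down the stationary distribution of this chain by solving the balance equations, which gives the familiar expression with the $2\alpha^3 - 4\alpha^2 + 1$ denominator --- this part is identical to the classical computation and can be quoted.

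Next I would compute the expected reward. The key difference from the fixed-reward setting is accounting: in the fixed-reward regime one counts blocks, but under transaction fees one must argue about \emph{how much fee} each won block carries. The natural modeling assumption, which I would state explicitly, is that fees accumulate at a constant rate in the mempool, so that a block that has been ``waiting'' longer --- in particular a block mined on top of a stale private chain that is revealed late --- claims a larger pool of outstanding fees $\mathcal{R}(\cdot)$ than a block mined honestly at the tip. Concretely, when the selfish miner reveals a run of hidden blocks, the blocks deeper in that run have claimed the fees that were available at earlier times, and the analysis in \cite{undercut} is that this shifts the realized reward upward relative to the honest baseline. I would parametrize the per-block fee contributions in each of the outcome events (selfish wins $k$ in a row, race won by selfish, race split, honest catches up) and take the expectation against the stationary distribution computed in the first step.

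Then I would collect terms. Grouping the reward contributions by powers of $\alpha$ and separating the race-dependent ($\gamma$) contributions from the race-independent ones should reproduce the stated numerator $5\alpha^2 - 12\alpha^3 + 9\alpha^4 - 2\alpha^5 + \gamma(\alpha - 4\alpha^2 + 6\alpha^3 - 5\alpha^4 + 2\alpha^5)$. As a sanity check I would verify the limiting behavior: as $\alpha \to 0$ the expected reward should vanish, and the $\gamma$-coefficient $\alpha - 4\alpha^2 + \cdots$ should match the marginal effect of winning the race in the classical formula; I would also check that setting the fee-accumulation correction to zero recovers the Eyal--Sirer revenue expression, which validates the bookkeeping.

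The main obstacle I anticipate is precisely the fee-accounting step: unlike block-counting, the transaction-fee reward depends on the \emph{timing} of block publication and on modeling assumptions about the mempool arrival process, and making this rigorous (rather than heuristic) requires either a clean stationarity assumption on the fee stream or an explicit argument that the relevant expectations factor through the stationary block-level chain. Getting the exact polynomial coefficients to come out as stated --- in particular the specific constants $5, 12, 9, 2$ and $1, 4, 6, 5, 2$ --- will hinge on exactly which fee-accumulation convention is adopted, so I would expect to need to reverse-engineer the intended convention from \cite{undercut} to match the claimed closed form, and to flag any discrepancy as a point where the modeling assumptions must be pinned down.
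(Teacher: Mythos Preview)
Your first step---quoting the Eyal--Sirer stationary probabilities $p_s$ over states $0, 0', 1, 2, \ldots$---matches the paper exactly. The gap is in your second step. The paper does \emph{not} model fee accumulation over time, does not weight blocks by how long they ``waited'' before publication, and does not enumerate outcome events like ``selfish wins $k$ in a row'' with per-block fee contributions attached. Instead it defines $f_s$ to be the probability that a single transaction, conditioned on arriving while the system is in state $s$, eventually lands in a selfish-owned block on the longest chain; the expected reward is then just $\sum_s p_s f_s$, the fraction of all transactions (hence of all fees) captured by the selfish miner. Each $f_s$ is obtained by a short forward case analysis on who mines the next few blocks after the tagged transaction arrives---for instance $f_0 = \alpha^2 + \alpha(1-\alpha)(\alpha + \gamma(1-\alpha))$, $f_{0'}=\alpha$, $f_1 = \alpha + (1-\alpha)\alpha$, and $f_j = 1 - (1-\alpha)^{j-1}(1-f_0)$ for $j > 1$---and the tail $\sum_{j>1} p_j f_j$ collapses as a geometric series, after which the four pieces sum to the stated polynomial.

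Consequently there is no ``fee-accumulation correction'' to set to zero, and your proposed sanity check would not recover the block-reward formula: the fee-regime and block-regime expressions differ not because late-revealed blocks carry extra fees, but because block-reward share normalizes against main-chain blocks while fee share normalizes against arriving transactions, and these disagree whenever orphans occur. The transaction-tracking decomposition handles this automatically with no mempool-timing model at all---which is precisely the obstacle you flagged as the hard part. Switching to that decomposition removes the obstacle and produces the coefficients directly rather than by reverse-engineering a convention.
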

\textbf{Proof}: Consider the following states, in which at any timestep in the game, a miner $m_i$ running \texttt{SelfishMiner} will be in:
\begin{itemize}
\item State $0$: The highest block in $G$ is also the highest block in $G(m_i)$, i.e. miner $m_i$ is not hiding any blocks.
\item State $j > 0$: Miner $m_i$ has a private chain of length $j$, so $\mathcal{H}_{m_i} = \mathcal{H} + j$
\item $0'$: $\mathcal{H}_{m_i} = \mathcal{H}$, but one is produced by $m_i$, and the other is not, i.e. the selfish miner is racing with the public chain.
\end{itemize}
Let $f_s$ be the probability that a transaction is in the block found by a $m_i$ conditioned on being in state $s$, and $p_s$ is the probability that miner $m_i$ is in state $s$, then the expected reward for the miner is $\sum_s f_s \cdot p_s$. Observe that the probability of being in any state $s$ is identical in both the fixed reward and transaction-fee regime, and \cite{DBLP:journals/corr/EyalS13} have already shown that
\begin{align*}
& p_0 = \frac{1 - 2\alpha}{2 \alpha^3 - 4\alpha^2 + 1} \\
& p_0' = \frac{(1 - \alpha)(\alpha - 2\alpha^2)}{2 \alpha^3 - 4\alpha^2 + 1} \\
& p_j = \biggr(\frac{\alpha}{1 - \alpha}\biggr)^{j-1} \cdot \frac{\alpha(1 - 2\alpha)}{2 \alpha^3 - 4\alpha^2 + 1} \quad \text{ for } j > 0 \\
\end{align*}
\textbf{We now compute $f_s$.} Suppose a transaction arrives in state $0$. If the selfish miner $m_{i}$ finds the block, which occurs with probability $\alpha$, the transaction will be in the longest chain only if they mine the next block; this whole event occurs with probability $\alpha^2$. Otherwise, an honest miner $m_{-i}$ might find the next block with probability $(1-\alpha)$ and cause a race where whoever wins gets the transaction fee. The selfish miner wins this race with probability $\alpha + \gamma(1 - \alpha)$, so in total,
$$ f_0 = \alpha^2 + \alpha(1 - \alpha)(\alpha + \gamma(1 - \alpha))$$
For the $0'$ state, we observe that the selfish miner will get the transaction only if they find the next block first, which occurs with probability $\alpha$, so 
$$ f_{0'} = \alpha $$ 
The remaining states can be recursively computed: $f_1$ is precisely the probability of the selfish miner gaining another block to guarantee that the transaction will belong to $m_i$ plus the probability of returning to state $0'$ and winning. Formally,
$$ f_1 = \alpha + (1 - \alpha) f_{0'} = \alpha + (1 - \alpha) \alpha $$
To compute $f_j$ for $j > 1$, observe that the transaction will not belong to $m_i$ only if the honest miner wins the next $j - 1$ blocks, returning $m_i$ to state $0$, then winning the block in state $0$. This is precisely
$$ f_j = 1 - \biggr( (1 - \alpha)^{j-1} (1 - f_0) \biggr) \quad \text{ for } j > 1 $$
\textbf{Finally, we compute $\sum_s p_s f_s$} and condense some of the simple but tedious computations:
$$ \E(\text{reward}) = p_0f_0 + p_{0'}f_{0'} + p_1f_1 + \sum_{j > 1} p_j f_j $$
$p_0f_0 + p_{0'}f_{0'} + p_1f_1$ can be computed by multiplying the individual expressions and summing, so we will state their results without explicitly writing out the intermediate steps. The more interesting computation is
\begin{equation} \label{selfish1}
\sum_{j > 1} p_j f_j = \sum^\infty_{j = 2} \biggr[ \biggr(\frac{\alpha}{1 - \alpha}\biggr)^{j-1} \frac{\alpha - 2\alpha^2}{2\alpha^3 - 4\alpha^2 + 1} - \alpha^{j-1} (1 - f_0) \frac{\alpha - 2\alpha^2}{2\alpha^3 - 4\alpha^2 + 1}\biggr]
\end{equation}
By the properties of a geometric series, $\sum^\infty_{j = 2} (\frac{\alpha}{1 - \alpha})^{j-1} = \frac{\alpha}{1 - 2\alpha}$ and $\sum^\infty_{j = 2} \alpha^{j-1} = \frac{\alpha}{1 - \alpha}$, so we can simplify (\ref{selfish1}) to get
\[
\sum_{j > 1} p_j f_j = \biggr[ \biggr(\frac{\alpha}{1 - 2\alpha} \cdot \frac{\alpha - 2\alpha^2}{2\alpha^3 - 4\alpha^2 + 1} -  \frac{\alpha}{1 - \alpha} \cdot (1 - f_0) \frac{\alpha - 2\alpha^2}{2\alpha^3 - 4\alpha^2 + 1}\biggr] \\
= \frac{\alpha^2 - \alpha(1 + \alpha(1 - \alpha)(1 - \gamma))(\alpha - 2\alpha^3)}{2\alpha^3 - 4\alpha^2 + 1}
\]
and finally, we get our result by summing all the terms
\[
\E(\text{reward}) = \frac{2\alpha^2 - 5\alpha^3 + 2\alpha^4 + \alpha\gamma - 4\alpha^2 \gamma + 5\alpha^3\gamma - 2\alpha^4 \gamma}{2\alpha^3 - 4\alpha^2 + 1} + \frac{\alpha^2 - 3\alpha^3 + 2\alpha^4}{2\alpha^3 - 4\alpha^2 + 1} \\
+ \frac{2\alpha^2 - 5\alpha^3 + 2\alpha^4}{2\alpha^3 - 4\alpha^2 + 1} + \frac{\alpha^2 - \alpha(1 + \alpha(1 - \alpha)(1 - \gamma))(\alpha - 2\alpha^3)}{2\alpha^3 - 4\alpha^2 + 1} \\
= \frac{5\alpha^2 - 12\alpha^3 + 9\alpha^4 - 2\alpha^5 + \gamma(\alpha - 4 \alpha^2 + 6\alpha^3 - 5\alpha^4 + 2\alpha^5)}{2\alpha^3 - 4\alpha^2 + 1} \\ \square
\]
It is interesting to observe that for $\gamma \in [0,0.55]$, the expected reward found in Theorem \ref{selfishminer} is strictly greater than the expected reward under the block regime found in \cite{DBLP:journals/corr/EyalS13}, which is $\frac{\alpha(1 - \alpha)^2 (4 \alpha + \gamma(1 - 2\alpha)) - \alpha^3)}{1 - \alpha(1 + (2 - \alpha)\alpha)}$. \cite{undercut} reasons that this is because in high-number states, the selfish miner will gain a disproportionately large number of transaction fees while in the fixed-reward model, the rewards gained for mining a new block are always fixed. While interesting, the reward gain in the transaction-fee regime is minimal compared to the fixed-reward regime. A more interesting analysis comes from the fact that selfish miners can look at the value of a block when deciding whether or not to hide it. Consider the following fee-optimized selfish miner proposed in \cite{undercut}, which has a threshold parameter $\beta$:
\begin{algo}
The \texttt{FeeSelfishMiner} strategy for a miner $m_i$ will
\begin{enumerate}
\item Choose to mine on $\mathcal{H}_{m_i}$, which is the highest block on their private chain.
\item Take all remaining available transaction fees $\mathcal{R}(\mathcal{H}_{m_i})$ upon discovering a new block $B$.
\item If $h(B) = \mathcal{H}$ or if $\mathcal{T}(B) > \beta$, publish the block immediately. Otherwise, if there exists two distinct blocks of height $\mathcal{H}$ where one is owned by $m_i$, and $\mathcal{H}_{m_i} = \mathcal{H} + 1$, reveal $b_{m_i}^{\mathcal{H}}$.
\end{enumerate}
\end{algo}
The intuitive idea is as follows. If a selfish miner comes across a nearly worthless block with $\mathcal{T}(B) < \beta$, there is little risk in using it in a fork to potentially overtake the main chain. \cite{undercut} recompute both $p_s$ and $f_s$ and an additional state $0''$, which is where the \texttt{FeeSelfishMiner} will choose to honestly mine, and conclude with the following theorem, which we state without proof because the analysis is very similar to Theorem \ref{selfishminer}:
\begin{theorem} \label{feeselfishminer}
Let $\gamma \in [0,1]$ be the probability that if the selfish miner $m_i$ is in a race with the public chain, it is not orphaned. For a \texttt{FeeSelfishMiner} miner $m_i$ using threshold $\beta$, given $\chi(m_i) = \alpha \in (0,0.5)$, the expected reward is
$$ \E(\text{reward}) = \biggr(\frac{1 + \beta(1 - \alpha)^2 (1 - \gamma)}{e^{\beta} - 1} + 5\alpha + (1 - \alpha)^2 \gamma + \frac{2\alpha^2}{1 - 2\alpha} - 2\alpha^2 \biggr) \cdot \biggr(\frac{\alpha(1 - 2\alpha)(1 - e^{-\beta})}{1 - 2 e^{-\beta}\alpha - 3(1 - e^{-\beta})\alpha^2}\biggr) $$
\end{theorem}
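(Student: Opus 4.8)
The plan is to reuse the machinery of Theorem~\ref{selfishminer}: model \texttt{FeeSelfishMiner} as a Markov chain over a small state space, compute its stationary distribution $\{p_s\}$, compute for each state $s$ the expected fee value that the miner captures from a transaction arriving while in state $s$ (call it $f_s$, now reinterpreted as an \emph{expected captured value} rather than a bare probability), and return $\E(\text{reward})=\sum_s p_s f_s$. The single new modeling ingredient is that the publish-versus-withhold decision now depends on the realized value $\mathcal{T}(B)$ of the block just mined. Under the paper's normalization that $1$ is the mean block reward, together with the memoryless accumulation of fees between blocks, $\mathcal{T}(B)$ is exponential with mean $1$, so $\Pr[\mathcal{T}(B)>\beta]=e^{-\beta}$; this is where every $e^{\pm\beta}$ in the statement comes from, and the prefactor $\frac{\alpha(1-2\alpha)(1-e^{-\beta})}{1-2e^{-\beta}\alpha-3(1-e^{-\beta})\alpha^2}$ will turn out to be exactly the normalizing constant of the stationary distribution.

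Step one is to fix the states. Besides $0$, $0'$, and $\{j:j\ge 1\}$ as in Theorem~\ref{selfishminer}, we add the state $0''$ that is entered when, from state $0$, the selfish miner mines a block of value exceeding $\beta$ and therefore immediately publishes it and resumes honest mining on its own fresh block. Accordingly the event ``the selfish miner finds the next block'' out of state $0$ (probability $\alpha$) splits into a high-value branch $\alpha e^{-\beta}$ into $0''$ and a low-value branch $\alpha(1-e^{-\beta})$ into state $1$, while the honest branch is unchanged; transitions among $\{j\}$ and out of $0'$ keep the structure of Theorem~\ref{selfishminer}, except that the block carried into those states is now conditioned on being cheap. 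The chain on the $j$-states is still geometric, so $p_j=c\,\rho^{\,j-1}$ for the appropriate ratio $\rho$, and solving the balance equations for $0,0',0''$ and imposing $\sum_s p_s=1$ yields the $p_s$ in the claimed prefactor form. I would sanity-check this by letting $\beta\to\infty$, which kills the $0''$ branch and should recover the stationary probabilities of Theorem~\ref{selfishminer}.

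Step two is the capture values $f_s$. For states in which whether the miner eventually keeps a fee is independent of that fee's size --- the analogues of $f_{0'}$, of $f_j$ for $j\ge 1$, and of the honest state $0''$ --- the computation is that of Theorem~\ref{selfishminer} with the fee's mean value $1$ pulled out (with the extra proviso that any block entering those states was cheap, which rescales but does not reshape the recursions). The delicate state is $0$: the miner withholds the block it mines iff that block's value is $\le\beta$, so the size of the fee being carried and the action taken are \emph{correlated}, and one cannot simply multiply a capture probability by the mean fee. Instead I would write the state-$0$ contribution as $\E[\,\mathcal{T}\cdot\mathbf{1}(\text{captured})\mid\text{state }0\,]$ and split on $\{\mathcal{T}\le\beta\}$ versus $\{\mathcal{T}>\beta\}$, using $\E[\mathcal{T}\mid\mathcal{T}\le\beta]=\frac{\int_0^\beta x e^{-x}\,dx}{1-e^{-\beta}}=1-\frac{\beta}{e^{\beta}-1}$ and threading this through the race-winning probability $\alpha+\gamma(1-\alpha)$ exactly as in the derivation of $f_0$; the terms $\frac{1}{e^\beta-1}$ and $\frac{\beta(1-\alpha)^2(1-\gamma)}{e^\beta-1}$ in the statement are precisely what this conditional expectation leaves behind.

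Finally I would assemble $\sum_s p_s f_s$ and consolidate, the ``simple but tedious'' step flagged after Theorem~\ref{selfishminer}, with the sum over $j\ge 2$ handled by the geometric identities used for~(\ref{selfish1}). The main obstacle is not any single step but the value/action correlation: the argument breaks if one treats $f_0$ as a plain probability rather than an expected captured value, so the conditional fee expectations must be carried consistently through the state-$0$ transitions and through any ensuing race, and the stationary distribution must be re-derived honestly with the $e^{-\beta}$-weighted branching and the extra state $0''$ rather than patched from Theorem~\ref{selfishminer}. The $\beta\to\infty$ limit (recovering Theorem~\ref{selfishminer}) and the $\beta\to 0$ limit (no block is ever withheld, so the strategy collapses toward honest mining) are the two checks I would run against the final closed form.
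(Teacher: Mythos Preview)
Your proposal is correct and is exactly the approach the paper itself sketches: it states Theorem~\ref{feeselfishminer} \emph{without proof}, noting only that one ``recompute[s] both $p_s$ and $f_s$ and an additional state $0''$'' along the lines of Theorem~\ref{selfishminer}. Your write-up fills in precisely those pieces --- the $e^{-\beta}$-weighted branching out of state $0$, the new honest state $0''$, the geometric tail on the $j$-states, and the crucial observation that $f_0$ must be treated as an expected captured value because the withhold/publish decision is correlated with the fee size --- so there is nothing to compare against beyond confirming that your outline matches the paper's one-line hint.
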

Plugging in values into Theorem \ref{feeselfishminer}, it is observable that for $\alpha = \frac{1}{3}$, the \texttt{SelfishMiner} and \texttt{HonestMiner} schemes both achieve an expected reward of approximately $\frac{1}{3}$, but the \texttt{FeeSelfishMiner} with an optimal $\beta$ performs $13.6\%$ better with an expected reward of $\approx 0.38$. 

\begin{remark}
A key part of this analysis is the fact that blocks containing low-transaction fees are a useful way to start a forking attack because they are not very valuable to the attacker even if they are on the eventual chain. However, selfish mining itself offers no incentives for other miners to help, making it relatively weak unless a single miner owns a large percentage of the mining power. We believe that there is a lot of future work in analyzing attacks that take advantage of low transaction-fee blocks while also incentivizing other miners to mine on their fork to greatly increase the probability of a successful attack.
\end{remark}
 
\section{Transaction Fee Mechanism Design}
Besides preventing selfish-mining attacks, there are other desiderata to consider in the design of a transaction fee system. Here, we are specifically interested in the \textit{transaction fee mechanism} (TFM), and our goal is to modify or relax the first price auction setup in order to obtain or trade off against desired properties. To start, let us define the framework of a TFM.

\subsection{TFM definitions}
We consider a single auction instance corresponding to the next mined block:
\begin{enumerate}
\item $m$ users with values $\vv$ for having their transaction included in the block, submit bids \footnote{When discussing TFMs, we will use the terms ``bid" and ``transaction" interchangeably} $\bb$
\item the miner picks a subset of $k$ bids $B_I \sbe \bb$\footnote{For concision, throughout this paper we will abuse notation to treat vectors $\bb,\vv$ as sets} to include in the block
\item the blockchain then confirms a further subset $B_C \sbe B_I$ and enforces payments $\pp$ for each user whose transaction was confirmed, and revenue $r$ for the miner
\end{enumerate}
where the blockchain always executes the mechanism honestly, but the users and miner are strategic, acting to maximize their respective utility ($r$ for miner and $v_i - p_i$ for user $i$, with users with unconfirmed bids getting utility 0).

Note that all of this operates under the continuing assumption that the blockchain is effectively permissionless and anonymous.

\begin{remark}[Iterated analysis is hard] \label{iterated}
Ideally, we would prefer our analysis, including notions such as incentive compatibility and miner and user utility, to extend to an iterated game over multiple rounds of block proposals into account. Unfortunately, conditioning on large numbers of unpredictable future transactions makes such analysis becomes significantly more difficult, so the literature generally focuses on the mechanism applying to a single auction instance. This may be a rich area of further study, e.g. \cite{chung} points out that unconfirmed bids from fake bids added by miners are not easy to retract in blockchains like Ethereum and thus would have a chance of incurring the full cost of the bid in future rounds, which may serve as a penalty for encouraging honest miner behavior. We use this intuition in \ref{weakerIC}.
\end{remark}

Following the notation in \cite{chung}, which distinguishes between inclusion and confirmation, a TFM then corresponds to a tuple $(\mathbf{I}, \mathbf{C}, \mathbf{P}, \mathbf{M})$ where: ($\R_+ \ce [0, \infty) \sbe \R$)
\begin{enumerate}
    \item $\mathbf{I}: \R_+^m \to \R_+^m \times \set{0,1}^m$ is the miner's inclusion rule: given bids, output the bids and whether each bid is included
    \item $\mathbf{C}: \R_+^m \times \set{0,1}^m \to \set{0,1}^m$ is the blockchain's confirmation rule: given bids and an inclusion vector, output whether each bid is confirmed
    \item $\mathbf{P}: \R_+^m \times \set{0,1}^m \to \set{\R_+}^m$ is the blockchain's payment rule: given bids and a confirmation vector, output payments
    \item $\mathbf{M}: \R_+^m \times \set{0,1}^m \to \R_+$ is the blockchain's miner revenue rule: given bids and a confirmation vector, output miner revenue
\end{enumerate}

Since the blockchain always implements its rules honestly, in analysis we can simplify this to a tuple $(\mathbf{x}, \mathbf{p}, \mu)$ where
\begin{enumerate}
    \item $\mathbf{x} \ce \mathbf{C} \circ \mathbf{I}: \R_+^m \to \set{0,1}^m$ is the allocation rule: given bids, output whether each bid is confirmed
    \item $\mathbf{p} \ce \mathbf{P} \circ \mathbf{I}: \R_+^m \to \R_+^m$ is the payment rule: given bids, output payments
    \item $\mu \ce \mathbf{M} \circ \mathbf{I}: \R_+^m \to \R_+$ is the miner revenue rule: given bids, output miner revenue
\end{enumerate}

When the mechanism is random, we relax whether a bid is included/confirmed into the \textit{probability} of inclusion/confirmation $[0,1]^m$, and output \textit{expected} payments and revenue.


Note that the strategy space includes deviations from honest behavior such as:
\begin{enumerate}
    \item Users can bid dishonestly, which is sometimes known as \textit{bid shading}, possibly after examining some or all other bids (note that as long as user payments are non-negative, i.e. the mechanism does not pay users for bidding, this implies underbidding)
    \item Users can submit additional fake bids (for which they have zero value), possibly after examining all real bids
    \item Miners can submit fake bids, possibly after examining all real bids
    \item Miners can select bids dishonestly with respect to the inclusion rule
    \item Cartels (i.e. sets) of users and the miner can collude to deviate in ways that maximize their joint utility (sum of individual utilities), including overbidding, underbidding, submitting additional fake bids, and selecting dishonestly
\end{enumerate}

\begin{remark}[Joint utility is all you need] A simple offchain payment suffices to ensure that whenever joint utility is strictly increased, payoff can be distributed among colluders such that each receives strictly greater utility.
\end{remark}

Due to idiosyncratic properties of blockchains (i.e. costless account creation, permissionlessness, pseudonymity), all of the above deviations can be conducted undetectably, which prevent explicit penalization for deviations.

\subsection{Incentive compatibility}
From a mechanism design perspective, a TFM is effectively an auction where the auctioneer may or may not be truthful. Further, the majority of previous mechanism design literature assumes a trusted auctioneer who honestly executes the mechanism. As such, attempts to design TFM usually focus on the following incentive compatibility properties:
\begin{enumerate}
    \item a TFM is \textit{user incentive compatible} (UIC) if truthful bidding is dominant for users.
    \item a TFM is \textit{myopic miner incentive compatible} (MMIC) if truthful implementation of the mechanism is dominant for miners where only single-round utility is considered (recall \ref{iterated})
    \item a TFM is \textit{off-chain-agreement-proof} (OCA-proof) if no off-chain agreement (OCA) \footnote{Note that throughout this paper we assume off-chain agreements have fulfillment guarantees, as is implied by the literature} between the miner and any number of users can improve joint utility over the outcome from bidding and implementing the mechanism honestly, respectively. We can relax this by denoting TFMs robust against OCAs of the miner and up to $c$ users as $c$-OCA-proof.
\end{enumerate}

\begin{remark}[UIC, MMIC, OCA-proof are incomparable]
No pair of notions yields entailment, which motivates thinking independently about all three, e.g. instead of some unifying property. For most directions this quickly follows from considering first-price, second-price, or posted-price auctions. We will present the proof of one subtle direction to give intuition on the relationships between the notions:
\begin{prop}
OCA-proof does not imply MMIC
\end{prop}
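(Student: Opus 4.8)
The plan is to produce one explicit TFM $\Theta=(\mathbf{x},\mathbf{p},\mu)$ for which (a) some bid profile admits a miner implementation deviation that strictly increases miner revenue, so $\Theta$ is not MMIC, yet (b) no coalition of the miner with any set $S$ of users can strictly increase its joint utility over the all-honest outcome, so $\Theta$ is OCA-proof. The conceptual wedge is that MMIC quantifies over \emph{all} bid profiles while OCA-proofness only compares against the outcome of \emph{honest} bidding: I want $\Theta$ to contain a ``skimming gadget'' in its inclusion/confirmation rule that is dormant on honest-looking inputs but, on an input containing an inflated bid, lets the miner change which bids are included so as to charge a confirmed bidder a higher threshold that is still below that bidder's (inflated) bid. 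Two design constraints drive the construction: the skimmed revenue must come \emph{entirely} out of the confirmed bidder's surplus, so joint utility is only redistributed and never created when that bidder is in the coalition; and the gadget must be triggerable only by over-bidding, so that any coalition which realizes a triggering profile must contain the over-bidder, making the skim a pure internal transfer.

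The proof then decomposes as follows. Step 1: record the honest baseline — on truthful bids $\Theta$ reduces to a standard competitive-price (VCG-style) auction, so I write down the baseline allocation, payments, and per-coalition joint utilities. Step 2 (not MMIC): exhibit a profile with a single inflated bid, display the alternative included set produced by the deviation, and compute that the confirmed bidder's payment, hence $\mu$, strictly rises; I then note that on this profile the ``honest'' bidders are \emph{not} bidding truthfully, which is exactly why this does not already contradict OCA-proofness via the trivial coalition. Step 3 (OCA-proof): for an arbitrary $C=\{\text{miner}\}\cup S$ and value profile $\vv$, bound the best joint utility of any joint deviation, splitting into deviations that leave every bid at its truthful value — handled by the competitive-price structure exactly as in the textbook proof that second-price/VCG auctions are OCA-proof — and deviations in which some member of $S$ over-bids, for which I show any gadget-skimmed revenue is matched dollar-for-dollar by that member's utility loss, and that an over-bid used merely to reshuffle who is confirmed never beats honest play. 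Miner fake bids (revenue-neutral, since a confirmed fake is charged to the miner) and dishonest confirmation are dispatched alongside.

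I expect Step 3, and within it the over-bidding case, to be the main obstacle: the delicate claim is that the gadget cannot be \emph{leveraged} — that a small over-bid by a coalition member can never be amplified by the miner's choice of included set into a revenue gain larger than that member's loss — so that the conservation-of-joint-utility bookkeeping really does pin every coalition to the honest baseline. Calibrating the gadget's thresholds so this holds with equality (required for OCA-proofness) while still leaving the miner strictly positive room to profit on non-honest profiles (required for the MMIC violation) is where the real work lies. A lighter alternative, once the UIC--MMIC--OCA impossibility has been established, is to instead construct any non-trivial mechanism that is simultaneously UIC and OCA-proof and conclude from the impossibility that it fails MMIC; but that merely relocates the difficulty into the construction.
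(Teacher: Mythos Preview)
The proposal is not a proof: you never actually specify the mechanism $\Theta$. You describe desiderata for a ``skimming gadget'' and a three-step verification plan, but the plan is conditional on a construction you have not given, and you yourself flag Step~3 as the unresolved obstacle. A non-implication needs one concrete witness together with a verification; yours supplies neither.

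There is also a conceptual slip in the framing. You want the gadget to be ``dormant on honest-looking inputs'' and ``triggerable only by over-bidding,'' but the mechanism sees only the bid vector $\bb$, never the values $\vv$. Every bid vector is the truthful vector for \emph{some} value profile (namely $\vv=\bb$), so there is no input the mechanism can recognize as dishonest, and any profile on which the miner profitably deviates is also a truthful profile for some $\vv$. OCA-proofness must therefore be checked there directly. The correct observation---which you do state---is that the miner's gain can be a pure transfer out of the confirmed user's surplus, leaving joint utility unchanged; but once you see that, the dormancy and triggering machinery is doing no work.

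The paper exploits exactly this transfer observation with a two-line mechanism: confirm only the single highest bid, who pays their bid to the miner, \emph{except} that a sole bidder pays nothing. With one real bidder the miner injects a fake second bid and collects the full bid as revenue (so not MMIC), but this merely shifts $v_1$ from the user's column to the miner's, leaving joint utility at $v_1$; with several bidders it is a single-slot first-price auction, OCA-proof by the standard argument. That is the entire construction and verification. Your VCG-plus-gadget plan, even if it could be completed, is far more machinery than the statement requires. (As a side note, your ``lighter alternative'' is not lighter but impossible: by the impossibility theorem, any nontrivial UIC and 1-OCA-proof TFM with finite block size does not exist, and the trivial mechanism is vacuously MMIC.)
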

\begin{proof}
Consider an auction where only the highest bid is confirmed, and they simply pay their bid which the miner receives as revenue, except when they are the only bid, in which case they pay nothing and the miner gets nothing. This is OCA-proof since there is no nonzero joint utility for the miner to collude with any unconfirmed user, and the joint utility for the cartel consisting of the miner and the single confirmed user cannot exceed the value $v_i$ of the user. However, this is not MMIC, since in cases where there is only one bid, the miner can inject a fake bid to get revenue nonzero.
\end{proof}
\end{remark}

\begin{remark}(Incentive compatibility properties are naturally motivated) \label{tfmprops}
We see that these canonical properties are also naturally motivated by a view of the TFM as a user-centric transaction-processing service. Possible such desiderata include:
\begin{enumerate}
\item Simplify user bidding, which reduces overpayment, user-side computational costs, and overall user experience
\item Increase network capacity and reduce delays for users
\item Increase network robustness and decentralization properties
\item Allow users to pay for priority inclusion in a block
\item Allow users to pay for other properties of blocks (e.g. transaction ordering or bid escalators \footnote{this refers to a user specifying a more detailed curricula of how much they are willing to pay for their transaction to be included within the $k$ next blocks for each $k$})
\end{enumerate}
Among these, (1) motivates UIC and OCA-proofness as a simple bid should intuitively just be a function of the user's personal utility $v_t$ and not competing bids or off-chain miner offers (e.g. (1) bidding $v_t$ or either bidding a posted price $p$ if $v_t < p$ or abstaining otherwise). Similarly, (2) motivates MMIC, as effective network capacity decreases if miners submit fake bids to manipulate revenue, which is a common way by which miners deviate from the honest protocol. (3) is closely related to mitigations against selfish mining, and additionally can be viewed as a motivation for OCA-proofness, as collusion is a centralizing vector. (4) is a property found in first-price auctions \ref{firstprice} and later ``tipping" in the EIC-1559 mechanism \ref{eip}. (5) is left open and mentioned among other open challenges in the conclusion.
\end{remark}

\section{Three-Way Incentive Compatibility is Impossible}
Sadly, a major result by \cite{chung} implies that getting this combination of incentive compability notions as written is impossible. Specifically:
\begin{theorem} \label{incompatible}
Assuming finite block size, no nontrivial single-parameter, possibly randomized TFM can be both UIC and 1-OCA-proof.
\end{theorem}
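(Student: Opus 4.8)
The plan is to extract rigid structure from UIC via the single-parameter characterization, then use side-contract deviations to collapse the miner's revenue and show the allocation cannot ``strand'' realizable surplus, and finally use the finiteness of the block to show that the only surviving mechanisms are trivial. First I would invoke Myerson's lemma for single-parameter domains: UIC forces the interim allocation $x$ to be monotone nondecreasing in a user's own bid, forces the payment to be $p(b)=b\,x(b)-\int_0^b x(z)\,dz$, and hence makes a truthful user of value $v$ get utility exactly $\int_0^v x(z)\,dz$. I would also record the feasibility facts built into the model — payments are nonnegative with $p(0)=0$, the miner's revenue $\mu$ is nonnegative, and (only the blockchain collects money) $\mu$ is at most the total collected payments — and write $\beta:=(\text{total payments})-\mu\ge 0$ for the amount burned.

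Next I would study the coalition $\{\text{miner},\text{user }i\}$ (all other users truthful): its joint utility equals the realized surplus minus the burn, $v_i\,x_i-\beta$, since the transfer $p_i$ is internal. Holding the other bids fixed and letting user $i$ over-bid from $v_i$ to any $b>v_i$ gives, after substituting the Myerson payment, the inequality $\beta(b)-\beta(v_i)\ge v_i\bigl(x(b)-x(v_i)\bigr)$ in the single-user instance; summing over a fine partition of $[0,b]$ and passing to the Riemann--Stieltjes limit yields $\beta(b)\ge\int_0^b t\,dx(t)=p(b)$, which together with $\beta\le p$ forces $\mu\equiv 0$ for a lone bidder — every payment is burned. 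More generally the same family of deviations, now also letting the miner pad the block with fake bids (which only move money internally to the coalition and consume block space) and re-order inclusion, forces: whenever a positive-value user is confirmed with probability strictly below $1$, the burn must grow by at least the stranded surplus $v_i\bigl(1-x_i\bigr)$.

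Then I would bring in the finite block size $k$. Non-triviality supplies a bid $b^\ast$ with $x(b^\ast)=c>0$ and $p(b^\ast)>0$ (take $b^\ast$ where the monotone $x$ actually increases). Consider $k{+}1$ users all of value $b^\ast$: by anonymity the honest allocation is symmetric, so each is confirmed with the same probability $q$, and $(k{+}1)q\le k$ forces $q<1$ — the over-demanded block necessarily starves surplus. Have one such user collude with the miner to have the miner include that user's bid in place of another (non-coalition) user bidding $b^\ast$; since the included multiset is unchanged ($k$ copies of $b^\ast$), the anonymous confirmation/payment/revenue rules give the same per-confirmed payment $\rho$ and the same miner revenue, so the coalition's net gain is exactly $(1-q)(b^\ast-\rho)\ge 0$, strictly positive unless $\rho=b^\ast$. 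Hence $1$-OCA-proofness forces $\rho=b^\ast$; but a confirmed user paying its full bid, fed back through the Myerson identity (in the presence of the other $k$ bids), forces $x_i(z;(b^\ast)^k)=0$ for all $z<b^\ast$, and chasing this — together with the single-user collapse $\mu\equiv 0$ and the ``burn absorbs stranded surplus'' inequality applied to padded instances — back through monotonicity and anonymity forces the single-user allocation to be constant, i.e.\ $p\equiv 0$, contradicting $p(b^\ast)>0$.

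I expect the third step to be the main obstacle. UIC constrains the allocation only coordinate-by-coordinate, so the mechanism can behave almost arbitrarily on genuinely multi-user profiles, and the delicate work is choosing the bid profiles and the fake-bid padding so that (a) the coalition's net gain is exactly identified as the stranded surplus — in particular so that displacing a non-coalition bidder does not quietly hand the lost surplus back through the miner-revenue term — and (b) the finiteness of the block is genuinely used (to guarantee a starved positive-value bidder and to bound how far the miner can inflate the block) rather than merely convenient. Making this accounting airtight, and reconciling the single-user collapse $\mu\equiv 0$ with the multi-user behaviour forced by block pressure, is where the proof has to do its real work.
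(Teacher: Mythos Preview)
Your overall architecture --- Myerson for UIC, then collapse miner revenue to zero via coalition deviations, then use finite block size to force triviality --- matches the paper's. But two of your three steps have genuine gaps.

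\textbf{Zero miner revenue.} You only establish $\mu\equiv 0$ for a lone bidder and then hand-wave the multi-user case with talk of ``stranded surplus'' and fake-bid padding. The paper handles this cleanly by using \emph{both} directions of the coalition deviation: overbidding ($v_i\to r'>v_i$) and underbidding ($v_i\to r<v_i$) together give the two-sided sandwich
\[
r\bigl(x_i(r')-x_i(r)\bigr)\;\le\;\pi(r')-\pi(r)\;\le\;r'\bigl(x_i(r')-x_i(r)\bigr)
\]
for $\pi=p_i-\mu$ (with $\bn$ fixed). This is exactly the Myerson payment sandwich, so the uniqueness clause of Myerson's lemma forces $\pi$ to equal $p_i$ up to an additive constant --- hence $\mu(\bn,\cdot)$ is constant in user $i$'s bid. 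Walking each coordinate to zero one at a time then gives $\mu(\bb)=\mu(\mathbf{0})=0$. (Your one-sided integration could in fact be salvaged: it already yields $\mu(\bn,b_i)\le\mu(\bn,0)$, and the same walk finishes --- but you did not carry this out, and the detour through ``burn absorbs stranded surplus'' is not a substitute.)

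\textbf{Finite block size.} Your third step assumes anonymity (``by anonymity the honest allocation is symmetric''), which is not a hypothesis of the theorem, and the displacement argument together with the final ``chasing this\ldots'' is exactly the obstacle you flag yourself. The paper's argument is both simpler and anonymity-free: take any $\bb$ with $x_i(\bb)=c>0$, adjoin $N$ new users all of value $b_i+p$ for small $p>0$ and $N>B(b_i+p)/(cp)$; by pigeonhole some new user $j$ has confirmation probability at most $B/N$. Now the miner colludes with $j$ by \emph{ignoring every other real bid and fabricating the entire vector $\bb$}, with $j$'s (lowered) bid placed in slot $i$. Since $\mu\equiv 0$ and $p_i(\bb)\le b_i$, the coalition's joint utility under this deviation is at least $(v_j-b_i)\,x_i(\bb)=p\cdot c$, whereas honestly it is at most $v_j\cdot B/N<p\cdot c$. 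No symmetry, no per-confirmed-payment $\rho$, no chasing needed.
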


To prove this, we will assume the following well known result by Myerson \cite{myerson81}: 
\begin{lemma}
\label{myerson}
Consider a single-parameter TFM $(\mathbf{x}, \mathbf{p}, \mu)$ that is UIC. Then the following properties must hold:
\begin{enumerate}
\item We define monotone as follows: Consider some $\mathbf{b} := (b_1,...,b_m)$. An allocation rule $x$ is monotone if for any $\mathbf{b}$, and any $b'_i > b_i$, $x_i(\textbf{b}_{-i}, b'_i) \geq x_i(\mathbf{b_{-i}}, b_i)$. The allocation rule $\mathbf{x}$ is monotone.
\item For any user $i$, bid $b_i$, and other bids $\mathbf{b}_{-i}$, the payment rule is defined as
\begin{enumerate}
\item If the mechanism is non-deterministic
$$ p_i(\mathbf{b}_{-i}, b_i) = b_i \cdot x_i(\mathbf{b}_{-i}, b_i) - \int_{0}^{b_i} x_i(\mathbf{b}_{-i}, t) dt $$
\item Otherwise,
\[ p_i(\mathbf{b}_{-i}, b_i) = \begin{cases} 
          \min\{z \in [0,b_i] : x_i(\mathbf{b}_{-i},z) = 1\} & \text{ if } x_i(\mathbf{b}_{-i}, b_i) = 1 \\
          0 & \text{ if } x_i(\mathbf{b}_{-i}, b_i) = 1  
       \end{cases}
 \]
\end{enumerate}
\item A user $i$'s payment must satisfy the "payment sandwich" inequality
$$ v \cdot (x_i (\mathbf{b}_{-i}, v') - x_i (\mathbf{b}_{-i}, v)) \leq p(\mathbf{b}_{-i}, v') - p(\mathbf{b}_{-i}, v) \leq v' \cdot (x_i(\mathbf{b}_{-i}, v') - x_i(\mathbf{b}_{-i}, v)) $$
Furthermore, for a non-decreasing function $x_i (\mathbf{b}_{-i}, \cdot)$ and $p (\mathbf{b}_{-i}, 0) = 0$, the payment rule is of the unique form presented in (2).
\item Let $f(z)$ be a monotonically decreasing function. If for any $z' \geq z \geq 0$, $z \cdot (f(z') - f(z)) \leq g(z') - g(z) \leq z' \cdot (f(z') - f(z))$ and $g(0) = 0$, then
$$ g(z) = z \cdot f(z) - \int_{0}^{z} f(t) dt $$
\end{enumerate}
\end{lemma}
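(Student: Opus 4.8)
The plan is to derive all four properties from the single hypothesis that truthful bidding is dominant (UIC). Throughout, fix a user $i$ and all competing bids $\mathbf{b}_{-i}$, and abbreviate $x(b) \ce x_i(\mathbf{b}_{-i}, b)$ and $p(b) \ce p_i(\mathbf{b}_{-i}, b)$; the user's utility from bidding $b$ when their true value is $v$ is $u(b; v) = v \cdot x(b) - p(b)$. UIC says that for every true value $v$, the bid $b = v$ maximizes $u(\cdot; v)$. The engine for everything is the \emph{double-deviation} pair: for any $v' \geq v$, comparing the truthful bid against the other value at each type gives
\begin{align*}
v \cdot x(v) - p(v) &\geq v \cdot x(v') - p(v') \\
v' \cdot x(v') - p(v') &\geq v' \cdot x(v) - p(v).
\end{align*}

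First I would prove monotonicity (Part 1) by adding the two inequalities, which cancels the payment terms and leaves $(v' - v)(x(v') - x(v)) \geq 0$; since $v' > v$ this forces $x(v') \geq x(v)$. Next, rearranging the \emph{same} two inequalities individually to isolate $p(v') - p(v)$ immediately yields the payment-sandwich bounds of Part 3,
$$ v \cdot (x(v') - x(v)) \leq p(v') - p(v) \leq v' \cdot (x(v') - x(v)), $$
so Parts 1 and 3 come essentially for free once the double-deviation pair is in hand.

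The technical core is Part 4, the integral identity, from which the payment formulas of Part 2 follow by specialization. Given the sandwich inequality and $g(0) = 0$ for a monotone $f$ (here playing the role of $x$; I would read the hypothesis with $f$ \emph{non-decreasing}, since with the stated decreasing convention the displayed chain of inequalities has the wrong orientation, and the relevant case is the monotone allocation rule), I would fix $z$, take a partition $0 = z_0 < \cdots < z_n = z$, apply the sandwich on each subinterval $[z_{k-1}, z_k]$, and sum to obtain
$$ \sum_{k} z_{k-1}\,(f(z_k) - f(z_{k-1})) \;\leq\; g(z) \;\leq\; \sum_{k} z_{k}\,(f(z_k) - f(z_{k-1})). $$
Both bounding sums are Riemann--Stieltjes sums for $\int_0^z t\, df(t)$; because $f$ is monotone it has bounded variation, so as the mesh tends to zero the squeeze forces $g(z) = \int_0^z t\, df(t)$, and Stieltjes integration by parts converts this into $g(z) = z f(z) - \int_0^z f(t)\,dt$. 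Applying this with $(f, g) = (x, p)$ gives the non-deterministic payment rule of Part 2(a), and uniqueness is automatic since the sandwich together with $p(0) = 0$ pins down $p$ exactly.

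Finally, for the deterministic case (Part 2(b)) I would use that $x$ is $\{0,1\}$-valued and monotone, so there is a threshold $\tau = \min\{z \in [0,b] : x(z) = 1\}$ whenever $x(b) = 1$; then $\int_0^b x(t)\,dt = b - \tau$ and the general formula collapses to $p(b) = b \cdot 1 - (b - \tau) = \tau$, while $x(b) = 0$ yields $p(b) = 0$ (correcting the evident typo in the second case of the statement). The main obstacle is the limiting argument in Part 4: justifying that both Riemann--Stieltjes sums converge to the same integral and that the squeeze survives the limit. This is exactly where monotonicity of $f$ is essential, guaranteeing bounded variation and hence integrability, and care is needed at the endpoints and at any jump discontinuities of $x$, where the threshold $\tau$ and the value $x(\tau)$ must be handled consistently with the normalization $p(0) = 0$.
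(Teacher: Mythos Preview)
Your proposal is correct and follows the classical route to Myerson's lemma: the double-deviation pair yields both monotonicity and the payment sandwich, and then a Riemann--Stieltjes squeeze argument extracts the integral payment formula, with the deterministic threshold case as a specialization. You also correctly flag the typos in the statement (the ``monotonically decreasing'' in Part~4 should be non-decreasing, and the second branch of Part~2(b) should read $x_i(\mathbf{b}_{-i}, b_i) = 0$).

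The paper, however, does not prove this lemma at all: it introduces it with ``we will assume the following well known result by Myerson'' and explicitly remarks that Parts~(3) and~(4) are direct corollaries of Myerson's original lemma, citing \cite{myerson81}. So there is nothing to compare against---you have supplied a full (and standard) proof where the paper simply invokes the result as background.
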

The idea here is that each user $i$ should only pay the minimum price that confirms their bid. It should be remarked that points (3) and (4) are actually direct corollaries of \cite{myerson81}'s original Myerson lemma. We also opt out of directly proving the deterministic case because it is similar but simpler than the general randomized case and uses the same overarching proof steps. We first prove the following lemma. For notational convenience, we define
$$ \pi_{\mathbf{b}_{-i}}(r) = p_i(\mathbf{b}_{-i}, r) - \mu(\mathbf{b}_{-i}, r) $$
where $\mu(\mathbf{b})$ is the expected miner revenue and $p_i(\mathbf{b})$ is the expected payment of user $i$.
\lem{ \label{ocainequality}
Consider a single-parameter TFM $(\mathbf{x}, \mathbf{p}, \mu)$ that is 1-OCA-proof. Then for any bid vector $\mathbf{b}$, user $i$, and $r,r'$ such that $r < r'$, 
$$ r \cdot (x_i (\mathbf{b}_{-i}, r') - x_i (\mathbf{b}_{-i}, r)) \leq \pi_{\mathbf{b}_{-i}}(r') - \pi_{\mathbf{b}_{-i}}(r) \leq r' \cdot (x_i(\mathbf{b}_{-i}, r') - x_i(\mathbf{b}_{-i}, r)) $$
}{
We first prove the left inequality. Assume for the sake of contradiction that there exists a vector $\mathbf{b}$, a user $i$, and $r < r'$ such that
$$ r \cdot (x_i (\mathbf{b}_{-i}, r') - x_i (\mathbf{b}_{-i}, r)) > \pi_{\mathbf{b}_{-i}}(r') - \pi_{\mathbf{b}_{-i}}(r) $$
Suppose the real bid vector is $(\mathbf{b}_{-i},r)$ and user $i$'s true value is $r$. Then either
\begin{enumerate}
    \item They do not have a side contract, so the miner's expected utility is $\mu(\mathbf{b}_{-i},r)$ and user $i$'s expected utility is $r \cdot (x_i (\mathbf{b}_{-i}, r) - p_i (\mathbf{b}_{-i}, r))$.
    \item They have a side contract, and the miner asks user $i$ to bid $r'$. Then the miner's expected utility is $\mu(\mathbf{b}_{-i},r')$ and user $i$'s expected utility is $r \cdot (x_i (\mathbf{b}_{-i}, r') - p_i (\mathbf{b}_{-i}, r'))$. This implies that their joint expected utility increases by $r \cdot (x_i (\mathbf{b}_{-i}, r') - x_i (\mathbf{b}_{-i}, r)) - (\pi_{\mathbf{b}_{-i}}(r') - \pi_{\mathbf{b}_{-i}}(r))$, which is $>0$ by our assumption. However, this now violates the TFM being 1-OCA-proof, so we have a contradiction.
\end{enumerate}
We now prove the right inequality, which is the same argument. Assume for the sake of contradiction that there exists a vector $\mathbf{b}$, a user $i$, and $r < r'$ such that
$$ \pi_{\mathbf{b}_{-i}}(r') - \pi_{\mathbf{b}_{-i}}(r) > r' \cdot (x_i(\mathbf{b}_{-i}, r') - x_i(\mathbf{b}_{-i}, r)) $$
Suppose the real bid vector is $(\mathbf{b}_{-i},r')$ and user $i$'s true value is $r'$. Then we get the same scenario as before, where the joint expected utility increases by a positive amount when the miner asks the user $i$ to bid $r$ instead through a contract, thus violating the TFM being 1-OCA-proof.  
}
We will use the above lemma and Myerson's lemma to prove the following key lemma:

\lem{ \label{zerorev}
For any single-parameter TFM, UIC and 1-OCA-proof imply zero miner revenue.
}{
Consider the following quantity, which only differs from $\pi_{\mathbf{b}_{-i}}(r)$ by a constant amount: 
$$ \tilde{\pi}_{\mathbf{b}_{-i}}(r) = p_i(\mathbf{b}_{-i}, r) - \mu(\mathbf{b}_{-i}, r) - (p_i(\mathbf{b}_{-i}, 0) - \mu(\mathbf{b}_{-i}, 0)) $$
This is precisely the currency lost by a contract between user $i$ and the miner by fixing $\mathbf{b}_{-i}$ and changing from $0$ to $b_i$ value. Applying Lemma \ref{ocainequality}, we see that
$$ r \cdot (x_i (\mathbf{b}_{-i}, r') - x_i (\mathbf{b}_{-i}, r)) \leq \tilde{\pi}_{\mathbf{b}_{-i}}(r') - \tilde{\pi}_{\mathbf{b}_{-i}}(r) \leq r' \cdot (x_i(\mathbf{b}_{-i}, r') - x_i(\mathbf{b}_{-i}, r)) $$
By definition, $\tilde{\pi}_{\mathbf{b}_{-i}}(0) = 0$ and because the TFM is UIC and the above expression satisfies the "payment sandwich" in Lemma \ref{myerson}(c), it follows that it must obey Lemma \ref{myerson}(d), that is that
$$ \tilde{\pi}_{\mathbf{b}_{-i}}(r) - b_i \cdot x_i (\mathbf{b}_{-i}, b_i) - \int_0^{b_i} x_i (\mathbf{b}_{-i}, t) dt $$
and because the TFM is UIC, its payment rule must also satisfy
$$ p_{\mathbf{b}_{-i}}(r) - b_i \cdot x_i (\mathbf{b}_{-i}, b_i) - \int_0^{b_i} x_i (\mathbf{b}_{-i}, t) dt $$
and therefore,
$$ \tilde{\pi}_{\mathbf{b}_{-i}}(r) = p_i(\mathbf{b}_{-i}, r) - \mu(\mathbf{b}_{-i}, r) - (p_i(\mathbf{b}_{-i}, 0) - \mu(\mathbf{b}_{-i}, 0)) = p_i(\mathbf{b}_{-i}, r) $$
which further implies that $\mu(\mathbf{b}_{-i}, r) = \mu(\mathbf{b}_{-i}, 0) - p_i(\mathbf{b}_{-i}, 0)$ is a constant when $\mathbf{b}_{-i}$ is held fixed.
\\\\
To finish off the proof, suppose for the sake of contradiction that there exists a single-paramter TFM that is UIC and 1-OCA-proof and has non-zero miner revenue. Then there exists a bid vector $\mathbf{b}^{(0)} = (b_1,...,b_m)$ such that $\mu(\mathbf{b}^{(0)}) > 0$. If we consider the sequence of bid vectors constructed such that for $i \in [m]$, $\mathbf{b}^{(i)} = (0,...,0,b_{i+1},...,b_m)$ and $\mathbf{b}^{(m)} = 0$. Recall that for a fixed $\mathbf{b}_{-i}$, we showed that under the current assumptions, $\mu(\mathbf{b}_{-i}, \cdot)$ is independent of user $i$'s bid. So in our sequence of bid vectors, $\mathbf{b}^{(i)} = \mathbf{b}^{(i-1)}$ and therefore $\mathbf{b}^{(0)} = \mathbf{b}^{(m)}$. But because user $i$ can only pay at most their bid, $\mu(\mathbf{b}_{i}) \leq |\mathbf{b}_{i}^{(m)}|_1 = 0$, which contradicts our assumption that $\mu(\mathbf{b}^{(0)}) > 0$. Thus, by contradiction, a single-parameter TFM that is UIC and 1-OCA-proof and has zero miner revenue.
}
Now we restrict ourselves to finite block sizes, and use Lemma \ref{zerorev} to show our impossibility theorem.
\lem{ \label{trivial}
For randomized single-parameter TFMs with finite block size, 1-OCA-proof and zero miner revenue, the TFM is the trivial mechanism that always pays the miner nothing and never confirms any transactions.
}{
Let $B$ denote the finite block size. Suppose for the sake of contradiction that there exists a non-trivial TFM $(\mathbf{x}, \mathbf{p}, \mu)$ that satisfies both UIC and 1-OCA-proof. So there exists a bid vector $\mathbf{b} = (b_1,...,b_m)$ and a user $i$ such that $x_{i}(\mathbf{b}) > 0$. Now for some positive value $p > 0$, let $N > \frac{B \cdot (b_i + p)}{x_i(\mathbf{b}) \cdot p}$ be some large integer, and consider another bid vector $\mathbf{b}' = (b_1,...,b_m,b_{m+1},...,b_{m+N})$ where $b_j = b_i + p$ for all $j \in [m+1,m + N]$. If this is the real bid vector and each user bids truthfully, then there exists a user $j \in [m+1,m+N]$ who bids $b_j$ and is included with probability $\leq \frac{B}{N} < \frac{x_i(\mathbf{b}) \cdot p}{b_i + p}$.

Now consider some user $j$. Under the honest mechanism, their joint utility with the miner (who gets 0 revenue by Lemma \ref{zerorev}) is 
$$ \leq b_j \frac{B}{N} < b_j \frac{x_i(\mathbf{b}) \cdot p}{b_i + p} = x_i(\mathbf{b}) \cdot p $$
But if the miner and user $j$ form a contract where user $j$ changes their bid from $b_j$ to $b_i$ and the miner pretends the real bid vector is $\mathbf{b}$ where $b_i$ actually comes from user $j$, their joint utility is
$$ (v_j - b_i) \cdot x_i (\mathbf{b}) = p \cdot x_i(\mathbf{b}) $$
and therefore 1-OCA-proof is violated. Thus, by contradiction, the TFM must be the trivial mechanism.
}
The proof of Theorem \ref{incompatible} directly follows from Lemma \ref{zerorev} and Lemma \ref{trivial}. $\square$

\section{Weaker Forms of Incentive Compatibility} \label{weakerIC}
Perhaps this incompatibility suggests that UIC and 1-OCA-proof are too strong. Short of ensuring zero payoff for users and miners being dishonest, where payoff is the utility increase from deviating compared to being honest, we can aim for the following:
\begin{enumerate}
    \item Bound the (worst-case or average) payoff from deviating
    \item Exploit the fact that fake or overbid transactions that are included but not confirmed cannot be retracted in many blockchains and thus run the risk of being confirmed at a later time and thus impose expected cost onto the miner, user, or cartel that proposed them (\ref{iterated}). Thus, we can try to incorporate this cost into utilities and prove weaker forms of incentive compatibility on these utilities
\end{enumerate}
\subsection{Nearly incentive compatible}
Towards (1), we extend \cite{lavi}'s notion of \textit{nearly incentive compatible} for users to nearly-IC, nearly-MMIC, nearly-$c$-OCA-proof.
\begin{definition} (discount ratio)
    A user $i$'s payoff from deviating can be modeled as their percent savings on payment or \textit{discount ratio}
    \eq{
    \delta_i^{discount} (v_i, \bn) \ce\begin{cases}
        \frac{p^t - p^*}{p^t} &\text{if } v_i \ge p^*(\bn) \\
        0 &\text{else}
    \end{cases}
    }
    where $p^*$ is the minimal price a user can pay to have their bid confirmed and $p^t$ is the price a user pays if they bid truthfully. Note that the 0 branch of the function simply formalizes the intuition that payoff can only be nonzero in cases where the user has any chance of getting positive utility. This also aligns with the fact that $0 \le \delta_i \le 1$.
\end{definition} \cite{lavi}
\begin{definition} (average and max discount ratio)
    Suppose true values are drawn iid from a distribution $F$ on $\R_+$.
    Then, the expected payoff for a user deviating assuming all other users are honest can be modeled the \textit{average discount ratio}
    \eq{
    \Delta_n^{discount, avg} \ce \E_{\vv \sim F} \delta_1^{discount}(v_1, \vv_{-1})
    }
    where WLOG by symmetry the choice of bidder 1 is arbitrary.

    We can also reason about the \textit{max discount ratio}
    \eq{
    \Delta_n^{discount, max} \ce \E_{\vv \sim F} \max_j \delta_j^{discount}(v_j, \vv_{-j})
    }
\end{definition} \cite{lavi}
We can adapt these to a generalized notion regarding additive payoff instead of percentage discount on payments that can also be applied to MMIC and OCA-proof and is a bit more natural to interpret being directly in terms of utility.
\begin{definition} (strategic payoff)
    For a user, miner, or cartel with up to $c$ users, their \textit{strategic payoff} is
    \eq{
    \delta(V, \bn) \ce \max\prn{u^* - u^t, 0}
    }
    where, $V \sbe \vv$ consists of the values, if any, of the users in the colluding group. (i.e. $\set{v_i}, \emptyset, \set{v_{i_1},...,v_{i_c}}$ for a user, a miner, and a cartel of $c$ users, respectively), and utility refers to joint utility, e.g. a sum of possibly $\sum_{v_j \in V} v_j - p_j$ for any users and $r$ for a miner.
\end{definition}
\begin{definition} (average and max strategic payoff) (ours)
    With the same assumptions on $F$ and other users as in discount ratio, let the \textit{average strategic payoff} for a user, miner, or cartel with up to $c$ users be
    \eq{
    \Delta_n^{avg} \ce \E_{\vv \sim F} \delta(V, \vv \setminus V), V \ce \set{v_1,...,v_c}
    }
    where WLOG by symmetry the choice of set V (up to size for cartels) is arbitrary.

    We can also reason about the \textit{max strategic payoff}
    \eq{
    \Delta_n^{max} \ce \E_{\vv \sim F} \max_{V'} \delta(V', \vv \setminus V')
    }

    For clarity, we can further superscript these by users, miners, and cartels with up to $c$ users (e.g. $\Delta_n^{av, users}$,\\$\Delta_n^{max, \le c \text{ }cartels}$).
\end{definition}

Note that as \cite{lavi} implicitly does, we will assume
$p^t \ge p^* > 0$ for all confirmed users.

Generalizing \cite{lavi} \cite{yao}'s analysis on near incentive compatibility for monopolistic auctions (theorem and scenario presented and discussed below \ref{monopolistic}), we can define:
\begin{definition}
    A TFM is \textit{nearly UIC} if for any $F$, the average payoff ratio for users satisfies $\Delta_n^{avg, users} \to 0$ with $\Delta_n^{avg, users} = O(1/n^\beta)$ for some constant $\beta > 0$ independent of $F$.

    The respective expressions using $\Delta_n^{avg, users}, \Delta_n^{avg, \le c \text{ } cartels}$ give the definitions of \textit{nearly MMIC} and \textit{nearly c-OCA-proof}, respectively.
\end{definition}

It is also worth noting that this is not the only formulation of approximate incentive compatibility via bounded payoffs (e.g. additive vs. multiplicative).
The choice of $O(1/n^\beta)$ is also somewhat arbitrary, and our notion may be regarded as $1/n^\beta$-near incentive compatibility as a special case of $f(n)$-near incentive compatibility.

    


\subsection{Weak incentive compatible}
Towards (2), following the proof of the above impossibility theorem, \cite{chung} introduces the notion of weak incentive compatibility, where unconfirmed transactions are penalized with discount rate $\gamma$ \footnote{One natural perspective on this is that $\gamma$ is the probability of the unconfirmed transaction being confirmed in a future block}. We adapt an equivalent version of their definition:
\begin{definition} ($\gamma$-strict utility)
    Let a player, miner, or cartel's utility under a strategy be $u$, which does not include unconfirmed bids.
    For each unconfirmed bid $i$ let $u_i$ be the utility resulting from that bid being confirmed. Note that this may include miner revenue coming from the bid.
    Then, their \textit{$\gamma$-strict utility} is $u + \gamma \sum_i \min(u_i, 0)$ for $\gamma \in (0,1]$.
\end{definition}
\begin{definition} ($\gamma$-weak incentive compatibility)
    For $\gamma \in (0,1]$, let $\gamma$-weak UIC be UIC under $\gamma$-strict utility. Define $\gamma$-weak MIC, $\gamma$-weak $c$-OCA-proof respectively for MIC, $c$-OCA-proof.
\end{definition}
\begin{remark}
Note that since each additional term is non-positive, $\gamma$-strict utility $u_\gamma$ must satisfy $u' \le u$, so trivially each incentive compatibility notion implies its $\gamma$-weak counterpart for all $\gamma \in (0,1]$. Similarly, $u_\gamma$ is monotonically decreasing in $\gamma$, so in particular the weakest notion of 1-weak utility can be used as a sufficient condition to prove that a TFM is not $\gamma$-weak incentive compatible for any $\gamma \in (0,1]$.
\end{remark}

\subsection{Properties of weak incentive compatible TFMs} \label{weakprops}
It turns out that weak TFM is still rather strong; as a start, satisfying all three notions of weak incentive compatibility already prescribes randomness and unconfirmed transactions as mandatory.

The following notion turns out to be a useful and natural assumption:
\begin{definition}
    A TFM is \textit{2-user-friendly} if there exists a bid vector such that it confirms at least two bids \cite{chung}
\end{definition}

\begin{theorem} (Necessity of randomness for weak UIC and 2-weak-OCA-proof) \cite{chung}
    A deterministic and 2-user-friendly TFM with finite block size cannot be both weak UIC and 2-weak-OCA-proof.
\end{theorem}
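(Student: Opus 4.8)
The plan is to mirror the proof of Theorem~\ref{incompatible}, carrying out each step with $\gamma$-strict utility in place of ordinary utility and exploiting that the extra penalty terms are non-positive. Write $B$ for the finite block size and suppose for contradiction that a deterministic, $2$-user-friendly TFM $(\mathbf{x},\mathbf{p},\mu)$ is both weak UIC and $2$-weak-OCA-proof. First I would check that the Myerson characterization of Lemma~\ref{myerson} survives weak UIC for deterministic mechanisms: the allocation is monotone, and a confirmed bid pays exactly its critical threshold $\tau_i(\bn)$. The reason is that the deviations pinning this down --- a confirmed truthful user shading down to another value still in its confirming region, and a truthful user just below $\tau_i(\bn)$ overbidding into the confirming region --- either involve no unconfirmed bids or at worst an included-but-unconfirmed bid whose penalty contribution is $\le 0$; hence each deviation's $\gamma$-strict utility is at most its ordinary utility, so the usual payment-sandwich argument still forces payment $= \tau_i(\bn)$ on the confirming region and $0$ off it.

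Next, following Lemma~\ref{ocainequality} and Lemma~\ref{zerorev}, I would use $2$-weak-OCA-proofness (hence also its one-user special case) to pin down $r \mapsto \mu(\bn,r)$ for each fixed $\bn$. Considering the cartel of the miner and user $i$, where the miner reassigns user $i$'s reported bid to a value on the far side of $\tau_i(\bn)$ --- and, when reassigning below $\tau_i(\bn)$, also omits that bid from inclusion so that no penalty is triggered --- OCA-proofness forces $\mu(\bn,\cdot) - p_i(\bn,\cdot)$ to be constant on user $i$'s confirming region and to drop by at most $\tau_i(\bn) - r$ below it. Telescoping through a sequence of bid vectors that zeroes out one coordinate at a time, exactly as at the end of Lemma~\ref{zerorev}, then bounds the miner's revenue on any profile by a sum of ``left-out'' bid gaps; in particular the revenue attributable to any single confirmed slot is at most the size of a marginal unincluded bid.

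Finally I would reproduce the padding argument of Lemma~\ref{trivial}: take a bid vector $\bb$ confirming two users (which exists by $2$-user-friendliness), pad it with $N \gg B$ fresh users each bidding slightly above every confirmed bid, and note that finiteness of the block forces some fresh user $j$ to be left unincluded, so $j$'s honest $\gamma$-strict utility is exactly $0$ (an unincluded bid incurs no penalty). The cartel of the miner together with $j$ and one of the two originally confirmed users --- which is where both the ``$2$'' in $2$-weak-OCA-proof and the ``$2$'' in $2$-user-friendly are spent --- then has the miner hand a confirmed slot to $j$ at its threshold price, while the second confirmed user is available to absorb the bounded change in miner revenue identified in the previous step. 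Since $j$'s value exceeds that threshold, the cartel's joint $\gamma$-strict utility strictly increases, contradicting $2$-weak-OCA-proofness.

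I expect the second and third steps together to be the main obstacle. With only the weak notions, the penalty slack means we can no longer conclude that miner revenue is identically zero (contrast Lemma~\ref{zerorev}); we only get that it is controlled by marginal ``left-out'' bids, and the delicate point is arranging the padded profile and the two-user cartel so that this residual revenue is provably overwhelmed. Morally, this residual is exactly the room a \emph{randomized} burning mechanism exploits to hedge the risk that a colluded-on extra bid is later confirmed and costs money; the theorem says a deterministic mechanism, unable to randomize that risk away, has no such room --- which is why randomness is necessary.
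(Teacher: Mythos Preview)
The paper does not actually prove this theorem: immediately after stating it and the companion result on unconfirmed transactions, the text reads ``We will omit the extensive proofs, but they can be found in \cite{chung}.'' So there is no in-paper argument to compare your proposal against; the paper's ``proof'' is simply a pointer to the source.

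That said, your outline is broadly the right shape --- adapt the Myerson step to $\gamma$-strict utility, extract a revenue constraint from the coalition inequalities, then pad to force a contradiction --- and it is essentially the architecture of the original argument in \cite{chung}. Two places where your sketch is thinner than it needs to be: first, in step~2 you assert that the miner can always drop a reassigned bid from inclusion to avoid the $\gamma$-penalty, but you then need to track what $\mu$ does on the resulting \emph{smaller} bid profile, not on $(\bn,r)$ with $r$ below threshold; the telescoping has to respect that the miner's deviation changes the inclusion set, not just one coordinate. Second, the phrase ``the second confirmed user is available to absorb the bounded change in miner revenue'' is doing a lot of unexplained work --- this is exactly where the $2$ in $2$-user-friendly meets the $2$ in $2$-weak-OCA-proof, and the actual argument has to exhibit a concrete two-user coalition move whose joint $\gamma$-strict gain strictly exceeds whatever residual revenue slack you bounded in step~2. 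Your current sketch names the obstacle but does not discharge it.
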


\begin{theorem} (Necessity of unconfirmed transactions for weak UIC and 1-weak-OCA-proof) \cite{chung}
    A nontrivial (possibly randomized) TFM that always confirms all transactions cannot be both weak UIC and 1-weak-OCA-proof.
\end{theorem}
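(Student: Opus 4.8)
The plan is to use the ``always confirms all transactions'' hypothesis to collapse the $\gamma$-weak notions onto the ordinary ones, and then to invoke the machinery already built for Theorem~\ref{incompatible}. If every bid is confirmed with probability $1$, then no player---user, miner, or cartel---ever holds an unconfirmed bid, so in the definition of $\gamma$-strict utility every correction term $\gamma\sum_i \min(u_i,0)$ is an empty sum. Hence $\gamma$-strict utility coincides with ordinary utility for all $\gamma\in(0,1]$ and for every colluding group, so ``weak UIC'' is literally UIC and ``$1$-weak-OCA-proof'' is literally $1$-OCA-proof here. It is worth emphasizing that this regime is genuinely disjoint from Lemma~\ref{trivial}: a finite block size $B$ cannot confirm all transactions once more than $B$ bids arrive, so we are in the unbounded-block-size case and must argue directly rather than cite that lemma.

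First I would apply Lemma~\ref{zerorev} as a black box: it states that any (single-parameter) TFM which is UIC and $1$-OCA-proof has miner revenue identically zero, and---importantly---its proof (via Lemma~\ref{ocainequality}, Myerson's characterization, and the telescoping reduction of an arbitrary bid vector to the all-zero vector) never uses finiteness of the block size; only Lemma~\ref{trivial} did. So $\mu \equiv 0$.

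Next I would pin down the payment rule. Since every bid is confirmed whatever it bids, the allocation rule is the constant monotone rule $x_i(\mathbf{b}_{-i},\cdot)\equiv 1$, and Myerson's characterization (Lemma~\ref{myerson}) applied to the UIC mechanism then forces $p_i\equiv 0$: a user is charged only the infimal bid at which they would still be confirmed, which here is $0$. (Equivalently, the randomized formula evaluates to $p_i(\mathbf{b}_{-i},b_i)=b_i\cdot 1-\int_0^{b_i}1\,dt=0$, using the standard normalization $p_i(\mathbf{b}_{-i},0)=0$ forced by individual rationality for a zero-value bid.) Thus the mechanism confirms every transaction, charges every user nothing, and pays the miner nothing---it is the trivial mechanism, contradicting the hypothesis of nontriviality. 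Chaining the three steps proves the theorem.

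The delicate part is definitional bookkeeping rather than a genuine mathematical obstacle. One must (i) make the collapse of the weak notions precise enough to cover miner-only and cartel deviations, not just single-user ones; (ii) check that ``nontrivial'' is being negated correctly---the all-confirm, zero-payment, zero-revenue mechanism genuinely \emph{is} weak UIC and $1$-weak-OCA-proof (its joint utility $\sum_j v_j$ is already the global maximum, leaving no room to gain from a side contract), so the theorem must be read as excluding exactly that degenerate mechanism, which is precisely the mechanism singled out by the conclusions $\mu\equiv 0$ and $p\equiv 0$; and (iii) confirm that Myerson's characterization and the normalization $p_i(\mathbf{b}_{-i},0)=0$ do apply to the degenerate constant allocation.
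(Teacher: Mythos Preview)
The paper does not prove this theorem here—it explicitly omits the argument and defers to \cite{chung}—so there is no in-paper proof to compare against. Judged on its own, your approach is sound and is the natural way to recycle the Section~5 machinery, but one step is overstated and should be tightened.

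Your blanket claim that ``no player ever holds an unconfirmed bid'' is not quite right on the OCA side. The hypothesis ``always confirms all transactions'' describes the \emph{honest} mechanism (honest inclusion takes every bid, and $\mathbf{C}$ confirms everything included); it does not stop a \emph{deviating} miner in a cartel from excluding bids, and excluded bids are unconfirmed. So for arbitrary cartel strategies the $\gamma$-correction term need not be an empty sum, and $1$-weak-OCA-proof need not literally coincide with $1$-OCA-proof. Fortunately you do not need the full equivalence. The only deviations that enter the proof of Lemma~\ref{ocainequality}, and hence of Lemma~\ref{zerorev}, are of the form ``user $i$ changes their bid while the miner runs honest inclusion''; under \emph{those} deviations every bid is included and therefore confirmed, so $\gamma$-strict and ordinary utility agree on the nose. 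That is precisely enough to carry the payment-sandwich inequality, and with it Lemma~\ref{zerorev}, through in the weak setting, yielding $\mu\equiv 0$. (Your collapse for UIC is unproblematic as stated, since there the miner is honest by definition.) After that, the Myerson step with $x_i\equiv 1$ forcing $p_i\equiv 0$, and your reading of ``nontrivial'' as excluding the confirm-everything/charge-nothing mechanism, are both correct; as you observe, that degenerate mechanism really is weak UIC and $1$-weak-OCA-proof, so the theorem can only be ruling it out.
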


We will omit the extensive proofs, but they can be found in \cite{chung}.

\section{Beyond Incentive Compatibility}
Beyond incompatibility, we may still be interested in other notions that help us characterize the tradeoffs of a TFM: we give one example here.

In the spirit of \ref{tfmprops}, it is possible that the benefits of reducing fake transactions via MMIC far outweigh those of simple user fees via UIC \footnote{a concrete example might be if we can simpify bidding by providing access to a first-price-auction bidding oracle, but storage space for total transactions is limited}, and in particular we want to impose a significant penalty on the miner for any fake transaction included. Since incentive compatibility applies for zero utility, even MMIC and UIC cannot give any guarantees on the magnitude of a protection against fake transactions, and at best imply that a miner or user is not encouraged to do so (but e.g. may very well be able to flood the network with fake transactions without any penalty).

Towards this end, we adapt an equivalent version of Roughgarden's \cite{roughgarden20}\cite{roughgarden21} notion of $\gamma$-costly for punishing fake transactions:
\begin{definition} ($\alpha$-costly)
A TFM is $\alpha$-costly if each confirmed fake transaction decreases its bidder's utility by at least $\alpha$.
\end{definition}
We usually define this in relation to a $\ep$-costly baseline, e.g. any UIC and MMIC TFM. In doing so, we assume a marginal cost $\ep>0$ incurred for all fake transactions, even those leading to no utility penalty. In real-world scenarios, this can be seen as added orphan risk to the miner due to blocks with more data taking longer to propagate. Thus, this notion is usually invoked when a given TFM achieves $\alpha$-costly with $\alpha \gg \epsilon$, as we will see with \ref{eip}.

\begin{remark}
    Note that this is thematically distinct from $\gamma$-strict utility from \cite{chung}, which concerns quantifying the total utility impact of unconfirmed transactions to relax MMIC. This notion instead quantifies the per-transaction of confirmed (fake) transactions to strengthen MMIC.
\end{remark}


\section{Classical TFMs} \label{classical}
We will use the above incentive compatibility notions and their weaker forms to analyze why classical mechanisms fall short of being good TFMs.

\subsection{First-price auction} \label{firstprice}
\begin{algo}
The \texttt{First-Price Auction} mechanism parameterized by the block size $B$ behaves as follows:
\begin{enumerate}
    \item \textbf{I}: Include the $B$ highest bids $b_1 \ge ... \ge b_B$, breaking ties arbitrarily. If the block is not full, we treat empty slots as bids of 0.
    \item \textbf{C}: Confirm all bids.
    \item \textbf{P, R}: The $i$th confirmed bid pays $b_i$ to the miner, and unconfirmed bids pay nothing. 
\end{enumerate}
\end{algo}

\prp{The first-price auction is not UIC, or even 1-weak UIC}{
Suppose $v_1 = 10, b_2 = 2, b_3 = 1$. The first bidder can save at least 7 by bidding e.g. $3$ and still having their bid confirmed. Note that no fake transactions are necessary here so 1-strict utility is the same as regular utility.
}

\prp{The first-price auction is MMIC}{
Since miner revenue is the sum of the bids, it is best response is to truthfully pick the highest ones.
}

\prp{The first-price auction is OCA}{
Consider a cartel of any number of users and the miner. The bids of the users in the cartel do not affect joint utility as their payments go to the miner, so we can arbitrarily set their bids to be their values (i.e. honest). Then, it remains to pick the $B$ bids to include: including a cartel user increases the joint utility by their value and including a non-cartel user increases the joint utility by their bid. Thus, to maximize joint utility the miner should honestly include the top $B$ bids to be confirmed.
}




\subsection{Second-price auction} \label{secondprice}
\begin{algo}
The \texttt{Second-Price Auction} mechanism is parameterized by the block size $B$. It behaves as follows: (differences from first-price bolded)
\begin{enumerate}
    \item \textbf{I}: Include the $B$ highest bids $b_1 \ge ... \ge b_B$, breaking ties arbitrarily. If the block is not full, we treat empty slots as bids of 0.
    \item \textbf{C}: Confirm all bids.
    \item \textbf{P, R}: All confirmed bids pay $\mathbf{b_{B+1}}$ to the miner, and unconfirmed bids pay nothing. 
\end{enumerate}
\end{algo}

Second-price auctions are famously UIC, but when the auctioneer is untrusted, fall to MMIC and OCA-proof:
\prp{Second-price auctions are not even 1-weak MMIC or 1-weak 1-OCA-proof.}{
Suppose $\bb = (10,9,8,3), k = 3, B = 4$. The miner gets revenue $3(3)=9$ honestly but by injecting a fake bid of $7$ can raise this to $3(7)=21$. 
\footnote{Generally, injecting a fake bid between $b_k,b_{k+1}$ is the dominant strategy}
Even under 1-strict utility we have $21-7 \ge 9$, so second-price auctions are not 1-weak MMIC.

Similarly, the miner can form a cartel with the fourth bidder to increase their bid to 7, raising their joint utility from $3(3)+0=9$ to $3(7)+0=21$, and even under 1-strict utility this yields (depending on $v_4$) $21+\min(7+v_4-7,0) \ge 21 \ge 9$, so second-price auctions are not 1-weak 1-OCA-proof.
}

\subsection{Monopolistic auction} \label{monopolistic}
Second-price auctions quickly motivate:
\begin{algo}
The \texttt{Monopolistic Auction} mechanism is parameterized by the block size $B$. It behaves as follows: (differences from second-price bolded)
\begin{enumerate}
    \item \textbf{I}: Include the $\mathbf{k^* \ce \max_{k} kb_k}$ highest bids $b_1 \ge ... \ge b_{k^*}$, breaking ties arbitrarily. If the block is not full, we treat empty slots as bids of 0.
    \item \textbf{C}: Confirm all bids.
    \item \textbf{P, R}: All confirmed bids pay $b_{k^*}$ to the miner, and unconfirmed bids pay nothing. 
\end{enumerate}
\end{algo}

\prp{Monopolistic auctions are not 1-OCA-proof}{
Suppose $\bb = (10,9,7,3), k = 3, B = 4$. The miner gets revenue $3(7)=21$ honestly but by forming a cartel with the third bidder to increase their bid to 8, they can raise their joint utility from $3(7)+v_3-7=14+v_3$ to $3(8)+v_3-8=16+v_3$, which does not change under 1-strict utility.
}

Although, a result from \cite{yao} \cite{lavi} does give (here stated using their original discount-ratio-based definition) that
\begin{prop} Monopolistic auctions are nearly UIC, satisfying:
\begin{enumerate}
    \item For $F$ with bounded support, $\Delta^{discount, max}_n \to 0$ and in particular $\Delta^{discount, max}_n = O(1/\sqrt{n})$.
    \item For any $F$, $\Delta^{discount, avg}_n \to 0$ and in particular $\Delta^{discount, avg}_n = O(1/\sqrt{n})$.
\end{enumerate}
When discussing asymptotic bounds, we will allow the constant in $O(\cdot)$ to depend on $F$.
\end{prop}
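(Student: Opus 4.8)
The plan is to work entirely with the \emph{empirical per-capita revenue curve} $\hat r(p)\ce p\cdot|\set{i:b_i\ge p}|/n$ and its population counterpart $\bar r(p)\ce p(1-F(p))$, with maxima $\hat R\ce\max_p\hat r(p)$ and $R^*\ce\max_p\bar r(p)$; the monopolistic price is exactly the empirical revenue-optimal posted price, so a truthful confirmed bidder pays $p^t=\argmax_p\hat r(p)$. (We take the block size non-binding, which is the regime in which $n\to\infty$ is meaningful.) First I would pin down the \emph{minimal} price $p^*$ that a fixed confirmed bidder, say bidder $1$ with value $v_1$, can be forced to pay by underbidding to some $b_1<v_1$. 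Such a deviation leaves the unnormalized revenue curve $R\ce n\hat r$ unchanged for $p\le b_1$ and for $p>v_1$ and lowers it by exactly $p$ on $(b_1,v_1]$, from which one checks that bidder $1$ can force a price $p$ while staying confirmed precisely when $p\le v_1$, $p$ is a left-record of $R$ (that is, $R(p)\ge R(p')$ for all $p'<p$), and $R(p)$ dominates both $\max_{p'\in(p,v_1]}(R(p')-p')$ and $\max_{p'>v_1}R(p')$; minimizing over feasible $p$ gives $p^*$. Since $p^t$ is itself feasible, $p^*\le p^t$, and the discount ratio is $\delta_1^{discount}=(p^t-p^*)/p^t$ when $v_1\ge p^*$ and $0$ otherwise.

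The second step shows that this shading is small because it amounts only to an $O(1/n)$ perturbation of the per-capita revenue. Feasibility of $p^*$, applied to the competitor price $p'=p^t\in(p^*,v_1]$, gives $R(p^*)\ge R(p^t)-p^t$; with $p^t=O(1)$ on bounded support and $R(p^t)=\Theta(n)$ this reads $\hat r(p^*)\ge\hat R-O(1/n)$, so $p^t$ and $p^*$ are both $O(1/n)$-approximate maximizers of $\hat r$. I would then invoke the Dvoretzky--Kiefer--Wolfowitz inequality to get $\E\,\sup_p|\hat r(p)-\bar r(p)|=O(1/\sqrt n)$, and hence, with high probability and in expectation, $\bar r(p^t)\ge R^*-O(1/\sqrt n)$ and $\bar r(p^*)\ge R^*-O(1/\sqrt n)$: both prices lie in the super-level set $S_\eta\ce\set{p:\bar r(p)\ge R^*-\eta}$ with $\eta=O(1/\sqrt n)$.

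The third step converts the revenue gap into a price gap. For $F$ with bounded support one argues that $\bar r$ has a well-separated maximizer $p^{mon}$ with at-least-quadratic decay nearby, so $S_\eta$ has diameter $O(\sqrt\eta)$; feeding $\eta=O(1/\sqrt n)$ through this naively yields only $\delta_1^{discount}=O(n^{-1/4})$, so to reach the stated $O(1/\sqrt n)$ one instead compares $p^t$ against the monopolistic price obtained after deleting bidder $1$'s bid --- a single-sample change to the empirical curve of size $O(1/n)$, near which $p^*$ is sandwiched --- and pushes this sharper $O(1/n)$ revenue gap through the quadratic bound to produce a price displacement of order $1/\sqrt n$. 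Taking the worst confirmed bidder rather than bidder $1$ only changes constants, giving part (1). For part (2) with arbitrary $F$ I would instead bound $\Delta_n^{discount,avg}=\E_{\vv\sim F}\,\delta_1^{discount}(v_1,\vv_{-1})$ by conditioning on where $v_1$ lands: a bidder far above $p^{mon}$ can shade down only to roughly the monopolistic price of the remaining $n-1$ bids (again an $O(1/\sqrt n)$ displacement), a bidder far below $p^{mon}$ is unconfirmed and contributes $0$, and the probability that $v_1$ lies in the $O(1/\sqrt n)$-window around $p^{mon}$ where a constant-fraction discount is possible is itself $O(1/\sqrt n)$; summing the contributions gives $\Delta_n^{discount,avg}=O(1/\sqrt n)$ with no regularity assumption on $F$.

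The hard part is exactly this revenue-gap-to-price-gap conversion: the quadratic estimate breaks when $\bar r$ is flat, kinked, or atomic near its maximizer, so the robust route is to bypass $\bar r$ in the decisive step and argue directly that deleting one of $n$ i.i.d.\ bids moves the empirical monopolistic price by $O(1/\sqrt n)$ in expectation, and then to carry out the position-conditioning bookkeeping in part (2) carefully enough that every configuration permitting a large discount is charged only $O(1/\sqrt n)$ probability. A secondary nuisance is tie-breaking in the definitions of the monopolistic price and of the left-record characterization of $p^*$, which is harmless for continuous $F$ but must be handled explicitly in general.
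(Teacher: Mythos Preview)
Your approach diverges substantially from the paper's. The paper does not prove (1) at all --- it is cited from Yao as a black box --- and its proof of (2) is a soft \emph{reduction} to (1) rather than a direct analysis: fix $\epsilon>0$, pick a truncation level $D$ with $F(D)>1-\epsilon/3$, note that with high probability all but an $\epsilon/2$-fraction of the $n$ samples land in $[0,D]$ and are then (conditionally) i.i.d.\ from the bounded-support distribution $G=F|_{[0,D]}$, apply part (1) to those $m\ge(1-\epsilon/2)n$ samples to get $\Delta_m^{discount,max}(G)<\epsilon/3$, and bound the remaining out-of-window bidders' discount ratios trivially by $1$. Summing gives $\Delta_n^{discount,avg}(F)\le\epsilon/6+\epsilon/3+\epsilon/2=\epsilon$. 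This sidesteps every structural issue you raise --- no DKW, no curvature of $\bar r$, no unique maximizer, no argmax stability --- by pushing all of it into the bounded-support case, which is taken as given.

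Your direct route via empirical revenue curves is a plausible attack on (1) itself, but the difficulties you flag at the end are genuine obstructions rather than nuisances. The revenue-to-price conversion fails outright without curvature: when $\bar r$ has a plateau near its maximum or $F$ has an atom at the monopoly price, the super-level set $S_\eta$ has constant diameter for all small $\eta$, so the quadratic step gives nothing. Your fallback claim --- that deleting a single sample moves the empirical monopolistic price by $O(1/\sqrt n)$ in expectation --- does not follow from DKW and is essentially as hard as (1) itself, since the empirical $\argmax$ can jump by a constant under a one-sample perturbation whenever the empirical curve has near-ties. Yao's actual proof of (1) does handle all of this, but by a combinatorial argument quite different from the DKW-plus-curvature template you sketch. (As a side remark, the paper's truncation argument as written establishes only $\Delta_n^{discount,avg}\to 0$ and not the claimed $O(1/\sqrt n)$ rate, since the dependence of $D$, $N_1$, $N_2$ on $\epsilon$ is not tracked quantitatively.)
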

\begin{proof}
We omit the proof of (1), which can be found in \cite{yao}, but will prove (2) given (1) to give some intuition on the probabilistic nature of the definition.

Fix a $0 < \ep < 1$ and a distribution $F$.

First, pick a $D > 0$ where $F(D) > 1-\ep/3$. Suppose we are sampling $n$ iid $v_1,...,v_n \sim F$.
Let $X_n$ be a random variable denoting the number of $v_i$'s that fall in $[0,D]$. By the Law of Large numbers, there exists $N_1$ where $n \ge N_1$ ensures that $\P(X_n \le (1-\ep/2)n) < \ep/6$.

Then, let $G$ be $F$ restricted to $[0,D]$, and we will crucially apply (1) to get $N_2$ where $m \ge N_2$ ensures $\Delta_m^{discount, avg}(G) \le \Delta_m^{max} < \ep/3$.

Now we are ready to prove the result. Pick $n \ge \max(N_1,N_2)$ and fix $v_1,...,v_n \sim F$. It suffices to show that $\Delta_n^{discount, avg}(F) \le \ep$. Recall that $\Delta_n^{discount, avg} \in [0,1]$. Then, we can write
\eq{
\Delta_n^{discount, avg}(F) &= \P(X_n \le (1-\ep/2)n)\Delta_n^{discount, avg}(F) + \P(X_n > (1-\ep/2)n)\Delta_n^{discount, avg}(F) \\
&\le (\ep/6)(1) + \sum_{m > (1-\ep/2)n}\P(X_n = m)[\Delta_n^{discount, avg}(F)] \\
&\le \ep/6 + \sum_{m > (1-\ep/2)n}\P(X_n = m)[\frac{m}{n}\Delta_n^{discount, avg}(G) + \frac{n-m}{m}(1)] \\
&\le \ep/6 + \sum_{m > (1-\ep/2)n}\P(X_n = m)[(1)(\ep/3) + (\ep/2)(1)] \\
&\le \ep/6 + \ep/3 + \ep/2 = \ep
}
where in step 3 we apply the definition of $\Delta_n^{discount, avg}$, and in step 4 we use the facts that $\frac{m}{n} \le 1, \frac{n-m}{m} \le \ep/2$
\end{proof}


\subsection{Posted-price auction} \label{posted}
\begin{algo}
The \texttt{Posted-Price Auction} mechanism is parameterized by the block size $B$ and a posted price $p$. It behaves as follows:
\begin{enumerate}
    \item \textbf{I}: Include any $B$ bids (that equal $p$). If the block is not full, we treat empty slots as bids of 0.
    \item \textbf{C}: Confirm all bids (that equal $p$).
    \item \textbf{P, R}: All confirmed bids pay $p$ to the miner and unconfirmed bids pay nothing.
\end{enumerate}
\end{algo}

\prp{Posted-price auctions are not 1-weak-OCA-proof}{
Suppose $p=10,\bb = (0,0,0,0), \vv = (5,0,0,0)$. The miner can form a cartel with user 1 and have them bid the posted $10$, which increases their joint utility from $0$ to $5-10+10=5$, which does not change under 1-strict utility.
}

\section{Towards Incentive Compatible TFMs}
\subsection{EIP-1559 mechanism}
\label{eip}
Now we will examine our first arguably robust mechanism, a combination of posted-price and first-price, which was proposed in \cite{eip1559} and later implemented in the Ethereum mainnet as the successor to the first-price auction, where it now runs.

\begin{algo}
The \texttt{EIP-1559 Mechanism} is parameterized by the block size $B$ and a posted price $p$. It behaves as follows: (differences from second-price bolded)
\begin{enumerate}
    \item \textbf{I}: Include highest $B$ bids $b_1 \ge ... \ge b_B \mathbf{\ge p}$, breaking ties arbitrarily. If the block is not full, we treat empty slots as bids of 0.
    \item \textbf{C}: Confirm all bids. Bids must be $\ge p$.
    \item \textbf{P}: The $i$th confirmed bid pays $b_i$, and unconfirmed bids pay nothing.
    \item \textbf{R}: For the $i$th confirmed bid, the miner gets $\mathbf{b_i - p}$ \textbf{and the remaining $p$ is burned}.
\end{enumerate}
\end{algo}
Intuitively, $p$ is usually denoted the \textit{base fee} and the remaining payment $b_i-p$ the \textit{tip}.

\prp{EIP-1559 is MMIC}{
The miner's revenue is exactly the sum of the bids they include, minus a constant $Bp$, so optimizing revenue is always equivalent to picking the highest bids honestly, and fake transactions are a strict loss of $p$ and never increase utility.
}

In fact, it explicitly punishes fake transactions:
\prp{EIP-1559 is $(p+\ep)$-costly}{
This is easy to see, as any fake transaction incurs at least the posted price $p$ along with $\ep$.
}

\prp{EIP-1559 is OCA-proof}{
Consider a cartel of any number of users and the miner. The bids of the users in the cartel should be below $p$ if their values are below $p$ (as confirming a bid by such a user would contribute negative joint utility) and at least $p$ if their values are at least $p$ so as to be an inclusion candidate. Beyond this, the tips do not matter as they go to the miner, so we can arbitrarily set their bids to be their values (i.e. honest). Then, it remains to pick the $B$ bids to include. Ignoring users with bids below $p$, including a cartel user $i$ increases the joint utility by $v_i-p \ge 0$, and including a non-cartel user $i$ identically increases the joint utility by their tip $v_i-p \ge 0$. Thus, to maximize joint utility the miner should honestly include the top $B$ bids to be confirmed (for all at least $p$).
}

But in accordance with incompatibility, incentive compatibility must fail somewhere. Here, it fails rather elegantly, not being UIC but lending itself to an obvious bid (as desired under \ref{tfmprops}) except under a specific regime of high demand:
\prp{EIP-1559 is typically UIC-like in the sense that there is a Nash of paying the base fee whenever at most $B$ users have value strictly greater than $p$. \footnote{Note that \cite{roughgarden20} actually gives an alternate definition of UIC as there exists a symmetric ex post Nash equilibrium that is best response for users, which does not necessarily have to be honest wrt their value. Under that definition, EIP-1559 is explicitly UIC outside the high demand regime. However, we remark that in the spirit of simplifying fees for users, an honest Nash is in practice very similar to what is usually just a slightly underbid Nash, and further \cite{chung} and \cite{roughgarden20} point out that one can always convert between mechanisms satisfying these two definitions of UIC easily through what is known as the revelation principle.}
}{
We claim that for user $i$, bidding $min(p,v_i)$ is a Nash. To see this, under the assumptions and all other users using the same strategy, we can bid the absolute minimum $p$ and be confirmed with the utility $v_i-p > 0$ since there are not enough users to fill up the block, so this is best response, as bidding more leads to monotonically greater payment and bidding less leads to 0 utility.
}
Of course, on that regime it completely fails to be UIC.
\prp{EIP-1559 is not 1-weak UIC}{
Suppose $p=5,B=3, \vv = (16,10,10,10), b_2=b_3=5$. The first bidder can save at least 5 by bidding e.g. $11$ and still having their bid confirmed as one of the top three, rather than bidding truthfully ($b_1=16$). Note that no fake transactions are necessary here so 1-strict utility is the same as regular utility.
}

Intuitively, if more than $B$ users have value strictly greater than $p$, then for all users it is preferable to bid at least $p$ (but less than their value $v_i$) to have a chance at getting confirmed and achieving nonzero utility. Thus, the situation is closer to a first-price auction where all confirmed users are penalized a flat amount $p$.

Alternate characterizations of the regime where UIC fails are ``low $p$" or ``high demand," depending on assumptions on the dynamics of user values and posted prices. Notably, \cite{eip1559} \cite{roughgarden20} motivate this design in economic terms, where the setting and updating of $p$ attempts to capture current demand, so UIC is seen as mostly valid up to a good setting of $p$. This leads to various open problems such as (1) optimal ways to set and update $p$ to capture demand based on incoming bids, towards minimizing the failure regime and (2) optimal ways to alleviate the lack of UIC during the failure regime, possibly trading off against other properties.

Related to (2), we note some variations of the EIP-1559 mechanism and results explored by \cite{roughgarden20} \cite{roughgarden21}, which give us a better sense of the design space and relevant tradeoffs:


Suppose we wanted to capture the value of the burned payments by passing them to the miner as revenue. Unfortunately:
\prp{Passing the burned values to the miner is equivalent to a first-price auction, and is not 1-weak-UIC ever}{
The miner and users can essentially take the auction off-chain, where the users communicate their bids off-chain to the miner, and then each user $i$ bids $p$ on-chain and then transfers $b_i-p$ off-chain to the miner, resulting in an effective bid of $b_i$ paid to the miner. This is the same formulation as the first-price auction above \ref{firstprice}, and by our previous result is not 1-weak-UIC.
}

In fact, passing any amount of the payments is meaningless, as this informal note from \cite{roughgarden20} \cite{roughgarden21} shows.
\begin{remark}
    Burning a $\alpha$ fraction of the base fee is economically equivalent to having no base fee at all, since in a scenario where originally the base fee would converge to $p*$, burning a $0<\alpha<1$ fraction of it and passing the rest to the miner as revenue would lead to the base fee converging to $2p^*$, with the unburnt fees redistributed between the mienr and colluding users through an OCA.
\end{remark}

Having conceded that full fee burning is needed, we can also motivate the existence of a base fee. Suppose we ran EIP-1559 with $p=0$, and simply burned all payments in an attempt to preserve incentive-compatibility for the miner. Denote this setup by a \textit{fee-burning first-price auction}, citing our earlier result that EIP-1559 behaves like a first-price auction on tips, and noting its similarity to the canonical first-price auction \ref{firstprice}. Sadly:
\prp{Fee-burning first-price auctions are not 1-weak OCA-proof \cite{roughgarden20} \cite{roughgarden21}}{
The obvious OCA is to move all payments off-chain and conduct a first-price auction with no fee burning. For example, if $v_1 = 10$ the miner can form a cartel with user 1, have them bid and burn 0 and include their transaction, which yields a payoff of 10, e.g. in exchange for an off-chain payment of 5. There is no need for fake transactions so the 1-strict utility is the same as regular utility.
}

Interestingly, we can trade off UIC in the failure regime:
\begin{algo}
The \texttt{Tipless EIP-1559 Mechanism} is parameterized by the block size $B$ and a posted price $p$. It behaves as follows: (differences from standard EIP-1559 mechanism in bold)
\begin{enumerate}
    \item \textbf{I}: Include highest $B$ bids $b_1 \ge ... \ge b_B \mathbf{\ge p}$, breaking ties arbitrarily. If the block is not full, we treat empty slots as bids of 0.
    \item \textbf{C}: Confirm all bids. Bids must be $\ge p$.
    \item \textbf{P}: The $i$th confirmed bid pays $b_i$, and unconfirmed bids pay nothing.
    \item \textbf{R}: For the $i$th confirmed bid, the miner gets $b_i - p$ and the remaining $p$ is burned.
\end{enumerate}
\end{algo}

\cite{roughgarden20}\cite{roughgarden21} shows that analogously to the standard EIP-1559 mechanism:
\begin{prop}Tipless EIP-1559 is MMIC and UIC, but not 1-weak OCA-proof
\end{prop}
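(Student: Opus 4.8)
The plan is to verify the two positive properties and the one negative property in turn, piggybacking on the analysis of the standard EIP-1559 mechanism, since the defining feature of the tipless variant is simply that a confirmed transaction's payment — and hence the miner's revenue from it — is independent of that transaction's own bid (a confirmed bid effectively pays the base fee $p$, which is burned, with the miner receiving a fixed per-transaction amount rather than a user-chosen tip). For UIC, the point is that a confirmed transaction pays a fixed amount independent of its bid while an unconfirmed one pays nothing, so a user with value $v_i$ gets utility $\max(v_i-p,0)$ if confirmed and $0$ otherwise, and the size of the user's own bid matters only insofar as it clears $p$ and secures a block slot; hence bidding truthfully is a best response, as overbidding cannot lower the fixed payment and underbidding below $p$ only risks forfeiting a profitable slot. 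As with standard EIP-1559, the delicate case is the over-demand regime, where more than $B$ users clear $p$ and a tie for the last slot must be resolved; there I would appeal to the symmetric ex post Nash formulation of UIC used in the EIP-1559 discussion (or restrict to the regime where $p$ is set so that at most $B$ users exceed it), under which truthful bidding is still an equilibrium.

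For MMIC I would observe that the miner's single-round revenue equals a fixed constant times the number of confirmed transactions (the base fee is burned), so, exactly as in the EIP-1559 MMIC argument with ``sum of tips'' replaced by ``count of confirmations,'' revenue is maximized by confirming as many bids above $p$ as the block allows — precisely the honest inclusion rule — and any injected fake bid is a strict loss of the burned base fee that never increases revenue. Hence honest implementation is dominant for the miner.

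For the failure of $1$-weak OCA-proofness I would give an over-demand counterexample: $B=1$, base fee $p=5$, values $v_1 = 10$ and $v_2 = 8$. Honestly the miner fills the single slot with the higher bid, so the cartel of the miner and user $2$ has joint utility equal to the miner's revenue $c$. But the miner is indifferent between any two bidders above $p$, so the off-chain agreement in which the miner instead includes user $2$ (who compensates the miner off-chain) raises the cartel's joint utility to $c + (v_2 - p) = c + 3 > c$. Since user $2$'s bid is confirmed and the cartel submits no fake or otherwise unconfirmed bids, its $1$-strict utility coincides with its utility, so even under the $1$-strict objective the cartel of the miner and a single user profitably deviates; this rules out $1$-weak $c$-OCA-proofness for every $c \ge 1$, hence in particular the stated property.

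The step I expect to be the real obstacle is not any of these short arguments but pinning down the exact sense of UIC in the over-demand regime: since all confirmed transactions pay the same $p$, a user tied for the last slot strictly gains by bidding slightly more, so truthfulness is not a dominant strategy there, and one must either adopt the weaker symmetric-Nash notion or argue within the ``correctly set base fee'' regime — the same caveat already attached to standard EIP-1559. Conceptually, the whole proposition hinges on one observation: decoupling a confirmed bid's payment from its own value is what buys UIC and, by leaving the miner indifferent among over-demand bidders, simultaneously makes scarce block space saleable to the highest off-chain briber, destroying OCA-proofness.
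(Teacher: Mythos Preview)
Your arguments for MMIC and for the failure of $1$-weak OCA-proofness are correct, and in fact the paper itself offers no proof of this proposition beyond citing Roughgarden, so you are supplying what the paper omits. Your OCA counterexample is clean and lands exactly in the over-demand regime, which is where the paper's summary table locates the failure.

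The one place where your write-up is slightly off is the hedging on UIC in the over-demand regime. You import the standard EIP-1559 caveat (``I would appeal to the symmetric ex post Nash formulation \ldots or restrict to the regime where at most $B$ users exceed $p$''), but that caveat is precisely what the tipless variant eliminates: because a confirmed user's payment is a fixed amount independent of her own bid, bidding $v_i$ is weakly dominant even when more than $B$ users clear $p$ --- overbidding never hurts (same payment), underbidding can only cost you a slot you wanted. This is why the paper's own summary table marks tipless EIP-1559 as genuinely UIC in high demand (not merely UIC-like), and why the proposition is phrased as a trade: you gain full UIC and lose OCA-proofness, exactly the swap relative to standard EIP-1559. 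Your own final paragraph actually identifies this mechanism (``decoupling a confirmed bid's payment from its own value is what buys UIC''), so the earlier hedge is unnecessary. The tie-breaking worry you raise is a standard measure-zero technicality handled the usual way and is not the obstacle you flag it as.
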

which effectively allows to trade off UIC against OCA-proofness in the failure regime, depending on what we prioritize in the \ref{tfmprops} sense. (e.g. simple fees vs. removing centralization vectors).


\subsection{Burning second-price auction} \label{burning}
We conclude by briefly mentioning a mechanism proposed by \cite{chung} specifically to show that three-way $\gamma$-weak incentive compatibility for any $\gamma \in (0,1]$ is possible. Following their necessity theorems, it indeed uses burning and randomness to achieve this result.

\begin{algo}
The \texttt{Burning Second-Price Auction} mechanism is parametrized by the block size $B$, a max cartel size $c$, a discount factor $\gamma \in (0,1]$, and a number of included bids $k \le B$ and number of included but unconfirmed bids $k'$ satisfying $k' = B-k, 1 \le k \le \lfloor \gamma k / c \rfloor$. It behaves as follows:
\begin{enumerate}
    \item \textbf{I}: Include the $B$ highest bids $b_1 \ge ... \ge b_B$, breaking ties arbitrarily. If the block is not full, we treat empty slots as bids of 0.
    \item \textbf{C}: Confirm a random subset of $\lfloor \gamma k / c \rfloor$ included bids with on-chain randomness.
    \item \textbf{P}: All confirmed bids pay $b_{k+1}$, and unconfirmed bids pay nothing. 
    \item \textbf{R}: The miner gets $\gamma(b_{k+1}+...+b_B)$, and all remaining payment is burned.
\end{enumerate}
\end{algo}

\begin{prop} For any $\gamma \in (0,1]$, the burning second-price auction is $\gamma$-weak UIC, $\gamma$-weak MMIC, and $\gamma$-weak OCA-proof.
\end{prop}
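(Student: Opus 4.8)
The plan is to verify the three $\gamma$-weak properties one at a time, reusing the observation that the pricing here is a ``second-price''-style rule: every confirmed bid pays the externality price $b_{k+1}$, which is insensitive to a single bidder's own bid so long as that bid stays above the $(k+1)$-st order statistic of the submitted bids. Throughout I would write $q \ce \lfloor \gamma k / c\rfloor / B$ for the probability that an included bid is confirmed (the confirmation set being a uniformly random subset of the $B$ included bids), and note that the honest miner revenue is the fixed quantity $\gamma(b_{k+1}+\cdots+b_B)$, a function only of order statistics $k+1$ through $B$ of the submitted bids. A recurring subpoint is that for a \emph{truthful} confirmed user the hypothetical utility of an included-but-unconfirmed bid is $v_i - b_{k+1}\ge 0$, so the $\gamma$-strict correction term vanishes and $\gamma$-strict utility agrees with ordinary utility on the honest profile; all the work is in showing deviations cannot beat this.

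For $\gamma$-weak UIC, I would fix a user $i$ and $\bn$ and compare bidding $v_i$ against an arbitrary $b_i'$. Raising $b_i'$ above $v_i$ can only (i) leave the allocation unchanged, or (ii) keep/move user $i$ in the included set while possibly raising $b_{k+1}$ toward or past $v_i$ --- in which case confirmation yields nonpositive surplus and the included-but-unconfirmed branch of $\gamma$-strict utility contributes a nonpositive term, so truthtelling is weakly better. Lowering $b_i'$ below $v_i$ can only drop user $i$ out of the included set or the confirmable region, which weakly decreases the probability of the strictly positive payoff $v_i-b_{k+1}$ and never manufactures a beneficial negative term; submitting one's own fake bids only compounds this and is additionally a net drain under $\gamma$-strict utility. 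Assembling the cases gives dominance of truthtelling.

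For $\gamma$-weak MMIC, I would first dispose of dishonest selection: revenue is monotone in the sorted vector of included bids and the honest rule includes the $B$ largest, so including smaller bids only weakly lowers $\gamma(b_{k+1}+\cdots+b_B)$. The substantive case is fake bids. An injected fake bid can raise some order statistics among positions $k+1,\dots,B$ and thus revenue by at most $\gamma$ times the displaced mass, but the injected bid is itself included with probability $q$ and, if confirmed, forces the miner to pay $b_{k+1}$ into the (mostly burned) pot, while if included-but-unconfirmed it adds the term $\gamma\min(u_i,0)$ to the miner's $\gamma$-strict utility, with $u_i$ accounting for both the $-b_{k+1}$ payment and any revenue the fake bid itself generates. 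The crux is a bookkeeping inequality showing the expected revenue gain from any family of injected fake bids is dominated by the expected payment-plus-penalty those bids incur; I expect to establish it by casework on where each fake bid falls in the sorted order, using $k\le B$ and $\lfloor\gamma k/c\rfloor\le\gamma k$.

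For $\gamma$-weak OCA-proof --- which I read as $\gamma$-weak $c$-OCA-proof for the cartel bound $c$ in the parametrization --- the honest joint utility of the miner plus up to $c$ cartel users is $\gamma\sum_{j=k+1}^{B} b_j + \sum_{i\in\text{cartel}} q\,(v_i-b_{k+1})^{+}\mathbf{1}[i\text{ included}]$, since tips and payments among cartel members are internal transfers and cancel, so WLOG cartel users bid their values except perhaps to secure inclusion. The cartel's levers are then (a) pushing cartel users' bids up to enter the top $B$, (b) depressing non-cartel bids (impossible, as honest users control them), and (c) injecting fake bids to move $b_{k+1}$ and the revenue tail; fake bids are ruled out exactly as in MMIC. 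The genuinely new ingredient is bounding the gain from colluding on \emph{which} included bids get confirmed together with the realized price, and this is where $\lfloor\gamma k/c\rfloor$ is designed to enter: the expected number of cartel confirmations is at most $c\lfloor\gamma k/c\rfloor/B\le \gamma k/B\le\gamma$, capping the cartel's confirmation-side advantage at a $\gamma$-fraction so that the burned-revenue deficit absorbs the remainder. I expect this last reconciliation --- the cartel simultaneously steering the confirmation set, the price $b_{k+1}$, and the fake-bid penalties, all under $\gamma$-strict accounting --- to be the main obstacle, and precisely the reason randomized confirmation and the $1/c$ factor are baked into the mechanism.
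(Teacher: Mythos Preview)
The paper does not prove this proposition --- it states the result and defers to \cite{chung} --- so there is no in-paper argument to compare against, and your outline must stand on its own. The three-part skeleton is right and the UIC portion is the standard second-price externality argument, but two genuine gaps remain.

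First, your reading that the $\lfloor\gamma k/c\rfloor$ confirmed bids are drawn uniformly from \emph{all} $B$ included bids (so every included bid has confirmation probability $q=\lfloor\gamma k/c\rfloor/B$) breaks your own UIC claim: a truthful bidder landing at position $j\in\{k{+}1,\ldots,B\}$ can then be confirmed and forced to pay $b_{k+1}\ge b_j=v_j$, yielding strictly negative expected utility, so she strictly prefers bidding $0$ over bidding $v_j$. In \cite{chung} the random subset is drawn only from the top $k$ included bids, with the tail $b_{k+1},\ldots,b_B$ serving purely as price-setters and never confirmed; you need that reading (or an additional case you have not written) for the UIC direction to go through. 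Second, the MMIC and $c$-OCA-proof sections are plans rather than arguments. You correctly name the key bookkeeping inequality for MMIC --- revenue gain from an injected fake bid is at most $\gamma$ times the upward shift it induces in $b_{k+1},\ldots,b_B$, and this must be dominated by the expected confirmation cost plus the $\gamma$-strict penalty on the unconfirmed branch --- but you do not establish it, and the casework on where each fake bid lands in the sorted order is exactly where the parameter constraints tying $k$, $k'$, and $c$ together do their work. For OCA-proofness your observation that the expected number of cartel confirmations is at most $\gamma$ is correct but not sufficient, since the cartel simultaneously steers the price $b_{k+1}$ and the burned tail; you explicitly flag this reconciliation as ``the main obstacle'' and then stop. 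In \cite{chung} it is resolved by a charging argument that pairs each unit of cartel surplus against either a burn or a $\gamma$-strict penalty term, and without supplying that pairing the OCA direction is unproved.
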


\subsection{Summary}
We summarize the above results in the following table:
\begin{table}[H]
    \centering
    \begin{tabular}{rlll} \toprule
         & UIC & MMIC & OCA-proof \\ \midrule
        First-price & & \checkmark & \checkmark \\
        Second-price & \checkmark & & \\
        Monopolistic & nearly & \checkmark &  \\
        Posted-price & \checkmark & $\checkmark^*$ & \\ \midrule
        EIP-1559 (low demand) & $\checkmark^{**}$ & $\checkmark^*$ & \checkmark \\
        EIP-1559 (high demand) & & $\checkmark^*$ & \checkmark \\
        Tipless EIP-1559 (low demand) & $\checkmark^{**}$ & $\checkmark^*$ & \checkmark \\
        Tipless EIP-1559 (high demand) & \checkmark & $\checkmark^*$ & \\
        Burning second-price & weak & weak & weak \\
        \midrule
    \end{tabular}
    \caption{Comparison of properties of TFMs}
    \label{tab:tfm_table}
\end{table}
$\mbox{*}$ $(\ep+\alpha)$-costly for a constant parameter $\alpha$

$\mbox{**}$ UIC-like, i.e. the paying the base price is a Nash

\section{Conclusion and Future Work}
The transaction mechanism design problem aims to replace trusted third parties with a coordination game involving incentives. The tasks of coordinating a distributed system, setting prices to match demand and transaction utility, and ultimately allocating assymetrically-valued transactions into vaguely-equal-effort-to-verify blocks thus fall to the users and miners, and by proxy, the mining and mechanism designer who writes those canonical roles for the users and miners to figure out. All of these essentially boil down to figuring out \textit{which users' transactions go where in the ledger} and \textit{which miner is responsible for verifying each transaction}, which motivates the separation of fields that have converged to the mining game and the transaction fee auction respectively.

In analyzing these two sets of problems, we have discovered a menagerie of attacks and misalignments, some of which are provably intractable \ref{incompatible}. But even under intractability, it is clear that we can make significant progress \ref{eip} \ref{burning} if we carefully define \ref{weakerIC} feasible objectives and analyze \ref{weakprops} design elements that can get us there. At the very least, it is clear that intuitive notions of transaction fees mining and mechanism design \ref{honest} \ref{firstprice} fall far short of what we can achieve, so formal analysis is both necessary and urgent.

We leave the reader with a few open questions to guide future research, collected broadly from ideas in the literature:
\begin{enumerate}
    \item Are the currently investigated undercutting and fee sniping attacks resolved completely by the nLockTime protocol, and if not, what new types of equilibrium behavior arise?
    \item Is $\gamma$-weak the strongest notion of incentive compatibility under which we can simultaneously achieve some version of UIC, MMIC, and OCA-proofness?
    \item Where is the provable limit for incentive-compatibility exactly: can we prove tighter impossibility bounds (e.g. with properties beyond zero miner revenue) or show examples of TFMs with stronger guarantees?
    \item Can we more precisely characterize the intuition of unconfirmed fake transaction risk as exploited in $\gamma$-strict utility? Are there more expressive notions of this expectation than a linear sum, e.g. discounted utility horizon, superlinear risk with respect to the size of the transaction?
    \item To what extent is distance-from-equilibrium analysis à la nearly incentive compatible productive? Are the assumptions (fixed distributions, iid values) reasonable under real-world conditions, and if not, is there a better model of weak UIC that provides fee simplicity without requiring honesty? If it is productive, how does this notion compare to weak incentive compatibility under $\gamma$-strict utility or other corrected utility metrics?
    \item What is the best update rule for the demand proxy (e.g. base fee) in a demand-modeling mechanism such as EIP-1559?
    \item What is the optimal tradeoff surrounding the failure regime of EIP-1559, and how do we mitigate deviations in those conditions? Alternatively, are there ways of minimizing that regime altogether? (this may just mean a better update rule)
    \item How can we tractably reason about mechanisms in their full iterated extent?
    \item Towards (5) in \ref{tfmprops}, can we design mechanisms that allow users more expressivity with respect to their preferences than just a linear sense of value related to being confirmed in this block or not? E.g. can we design a TFM that allows MEV-seeking players to directly pay for transaction ordering without being in the mining role?
    \item Given that burning is provably required for some incentive compatibility notions, are there robust ways of paying value forward to miners without encouraging fake transactions and OCAs?
\end{enumerate}




\printbibliography

\end{document}